\newcommand{\bra}[1]{\langle#1|}
\newcommand{\ket}[1]{|#1\rangle}
\newtheorem{theorem}{Theorem}[section]
\newtheorem{lemma}[theorem]{Lemma}
\newtheorem{proposition}[theorem]{Proposition}
\theoremstyle{remark}
\newtheorem{remark}[theorem]{Remark}
\theoremstyle{definition}
\newtheorem{definition}[theorem]{Definition}
\theoremstyle{example}
\newtheorem{example}[theorem]{Example}
\theoremstyle{notation}
\begin{document}
\title{The complete Heyting algebra of subsystems and contextuality}            
\author{A. Vourdas}
\affiliation{Department of Computing,\\
University of Bradford, \\
Bradford BD7 1DP, United Kingdom}

\begin{abstract}
The finite set of subsystems of a finite quantum system with variables in ${\mathbb Z}(n)$, is studied as a Heyting algebra.
The physical meaning of the logical connectives is discussed.
It is shown that disjunction of subsystems is more general concept than superposition. Consequently the
quantum probabilities related to projectors in the subsystems,  are incompatible
with associativity of the join in the Heyting algebra, unless if the variables belong to the same chain.
This leads to contextuality, which in the present formalism has as contexts, the chains in the Heyting algebra.
Logical Bell inequalities, which contain `Heyting factors', are discussed.
The formalism is also applied to the infinite set of all finite quantum systems, 
which is appropriately enlarged in order to become a complete Heyting algebra.
\end{abstract}

\maketitle

\section{Introduction}

It is well known that the subgroups of a group form a lattice.
This lattice contains a lot of information about the group, and in some cases (but not always) it determines the group.
Work in this area is summarized in \cite{LG1,LG2}.
Motivated by this work in group theory, we study in this paper the lattice of subsystems of a quantum system. 
The lattice formalism, and in particular the logical connectives meet, join, implication and negation, provide a language for the study of quantum systems.
We discuss the physical importance of these logical connectives, and we show that they are linked to projectors related to von Neumann measurements.

We consider a quantum system with positions in the Abelian group $G$ and momenta in its Pontryagin dual group $\widetilde G$.
We denote such a system as $\Sigma (G, \widetilde G)$.
Let $E$ be a subgroup of $G$, and $\widetilde E$ its Pontryagin dual group (which is related to $\widetilde G$
through a quotient relation, as discussed below).
We then call the system $\Sigma (E, \widetilde E)$, a subsystem of $\Sigma (G, \widetilde G)$
(or the system $\Sigma (G, \widetilde G)$ a supersystem of $\Sigma (E, \widetilde E)$).
This definition of subsystems which is based on subgroups, implies that in the semiclassical limit,
subsystems retain their identity. 

In the past few years there has been much work  on
the finite quantum systems $\Sigma ({\mathbb Z}(n), {\mathbb Z}(n))$ with positions and momenta in
${\mathbb Z}(n)$ (the ring of integers modulo $n$).
Reviews of this work have been given in \cite{FI1,FI2,FI3,FI4,FI5}.
A natural extension of this work is to consider quantum mechanics on profinite groups\cite{PRO1,PRO2}
which are at the `edge' of finite groups (in contrast to finite groups which are discrete, they are totally disconnected).
In particular, the profinite group ${\mathbb Z}_p$ ($p$-adic integers) is the inverse limit of the ${\mathbb Z}(p^n)$,
and the profinite group ${\widehat {\mathbb Z}}$ is the inverse limit of the ${\mathbb Z}(n)$.
In the former case the corresponding quantum system is $\Sigma ({\mathbb Q}_p/{\mathbb Z}_p,{\mathbb Z}_p)$
and has been studied in \cite{VOU2,VOU3} (${\mathbb Q}_p$ denotes $p$-adic numbers).
In the latter case the corresponding quantum system is 
$\Sigma ({\mathbb Q}/{\mathbb Z},{\widehat {\mathbb Z}})$ and has been studied in \cite{VOU4}
(${\mathbb Q}$ denotes rational numbers, ${\mathbb Z}$ denotes integers, and ${\widehat {\mathbb Z}}$ is defined below).
This work can be regarded as a study of `large finite quantum systems' and it factorizes
them (using the Chinese remainder theorem) as tensor products of `mathematical component systems' with dimension $p^e$ (where $p$ is a prime number).
They are fundamental building blocks of finite quantum systems, analogous to the prime numbers which are fundamental building blocks of all positive integers,
and to the $p$-Sylow groups which are the fundamental building blocks of finite and profinite Abelian groups.
Both, the number of the component systems and also the dimension of each component system can become arbitrarily large, 
but the formalism ensures that there are no divergencies (for a review see \cite{VOUREV}).

In ref.\cite{VOU1} we have studied the set of these systems $\{\Sigma ({\mathbb Z}(n), {\mathbb Z}(n))\;|\;n\in {\mathbb N}\}$,
as a directed partially ordered set with the partial order `subsystem'.
We have also added `top elements' in this set in order to make it a directed-complete partial order\cite{D1,D2,D3}. 
This includes the systems $\Sigma ({\mathbb Q}_p/{\mathbb Z}_p,{\mathbb Z}_p)$ and the system 
$\Sigma ({\mathbb Q}/{\mathbb Z},{\widehat {\mathbb Z}})$. 
In this paper we study these sets of quantum systems as distributive lattices.
Lattice theory \cite{BIR,BIR1,BIR2,BIR3} is intimately connected with logic. Special cases of distributive lattices are
Boolean algebras which are related to classical logic, and Heyting algebras which are related to intuitionistic logic (developed by Brouwer,
Heyting, Kolmogorov, etc) \cite{Hey,Hey1,Hey2}.
We explain that our lattices are complete Heyting algebras and we discuss their physical meaning, and in particular the physical importance of the non-validity of the law of the excluded middle.
The formalism uses ideas from Sylow theory for the underlying groups of positions and momenta.

Our lattice approach, provides significant insight to the problem of contextuality, from a different angle to that studied in the literature.
Since the work of Bell \cite{B1}and Kochen and Specker \cite{B2}, non-locality and contextuality have been 
studied extensively in the literature (e.g \cite{C0,C1,C2,C3,C4,C5,C6}).
Recent experimental work in this direction has been reported in \cite{E1,E2}.
Contextuality is more general than non-locality and it applies not only to multipartite systems but also to single systems.
The literature on contextuality makes clear the importance of logic in a quantum mechanical context,
and the lattice approach in the present paper is a contribution in this direction.

We explain that in our formalism, quantum probabilities associated with the projectors into subsystems, are incompatible
with associativity of the join in the Heyting algebra. This is
related to the fact that the `disjunction space' $H(m_1 \vee m_2)$, is larger than the space ${\rm span}[H(m_1)\cup H(m_2)]$ of
superpositions, and it
leads to contextuality. Contexts, in the present formalism, are chains of the Heyting algebra.
Quantum probabilities are compatible
with associativity of the join in the Heyting algebra,
only if the variables belong to the same chain.
Consequently, contexts are chains within the Heyting algebra of subsystems.
If quantum mechanics were a non-contextual theory, it would obey
the `logical Bell inequalities', studied for Boolean variables in ref\cite{A1}, and 
generalized here for Heyting variables.

In section 2 we discuss very briefly $p$-adic groups, the Sylow theory, and Heyting and Boolean algebras, in order to define the notation. In section 3 we discuss the set ${\mathbb N}_S$ of supernatural numbers as a complete Heyting algebra. 
In section 4 we define the sets ${\mathfrak Z}_S$ and $\widetilde {{\mathfrak Z}_S}$ of Abelian groups, which are used later as groups of positions and momenta of quantum systems. We show that they are complete Heyting algebras, and we discuss the meaning of the logical connectives in this formalism.

In section 5 we consider the finite set of subsystems of $\Sigma ({\mathbb Z}(n), {\mathbb Z}(n))$ and show that it is a Heyting algebra. The physical meaning of the logical connectives in this formalism is discussed in detail.
We then define contexts as chains, so that within a context,
quantum probabilities associated with the projectors into subsystems, are  compatible
with associativity of the join in the Heyting algebra.
Logical Bell inequalities are derived under the assumption that quantum mechanics is a non-contextual theory.
They are violated, and this proves that quantum mechanics is a contextual theory.
In section 6 we extend these ideas into the infinite set of subsystems of $\Sigma ({\mathbb Q}/{\mathbb Z},{\widehat {\mathbb Z}})$. 
We conclude in section 7, with a discussion of our results.

\section{Preliminaries}

\subsection{Notation}
\begin{itemize}
\item[(1)]
${\mathbb R}$ denotes the set of real numbers; ${\mathbb Q}$ the rational numbers;
${\mathbb Z}$ the integers;  ${\mathbb N}$ the natural numbers;
${\mathbb Z}_0^+={\mathbb N}\cup \{0\}$ the non-negative integers; 
and $\Pi$ the prime numbers.
\item[(2)]
$r|s$  or $r\prec s$ denotes that $r$ is a divisor of $s$.
${\mathbb D}(n)$ is the set of divisors of $n$.

A number $r$ is a Hall divisor of $n$, if it is a divisor of $n$ such that $r$ and $n/r$ are coprime.
This terminology is inspired by group theory. 
${\mathbb D}^B(n)$ is a subset of ${\mathbb D}(n)$ which contains the Hall divisors of $n$. 

${\rm GCD}(r,s)$ and ${\rm LCM}(r,s)$ are the greatest common divisor and least common multiplier correspondingly, of the integers $r,s$.

\item[(3)]
${\mathbb N}_S$ is the set of supernatural (Steinitz) numbers:
\begin{eqnarray}
{\mathbb N}_S=\left \{\prod p^{e_p}\;|\; p\in \Pi;\;\;\;\;
e_p\in {\mathbb Z}_0^+\cup \{\infty\}\right \}
\end{eqnarray}
The index $S$ in the notation indicates Steinitz or supernatural.
If all $e_p \ne \infty$ and only a finite number of them are non-zero, 
then we get the natural numbers ${\mathbb N}$. 
Sometimes, for clarity we denote as $e_p(n)$ the exponents in the factorization of $n$. 

We say that $a$ is a divisor of $b$,
when the corresponding exponents obey the relation $e_p(a)\le e_p(b)$, for all $p$.
The obvious conventions apply for inequalities that involve $\infty$. 
Also we define the generalized ${\rm LCM}(a,b)$ and ${\rm GCD}(a,b)$ as
\begin{eqnarray}\label{w2}
&&{\rm LCM}(a,b)=\prod p^{e_p};\;\;\;\;e_p=\max(e_p(a),e_p(b))\nonumber\\
&&{\rm GCD}(a,b)=\prod p^{e_p};\;\;\;\;e_p=\min(e_p(a),e_p(b)),
\end{eqnarray} 
with the obvious conventions for $\infty$.
In ${\mathbb N}_S$ we define the following elements
\begin{eqnarray}
\Omega =\prod _{p\in \Pi} p^{\infty};\;\;\;\;\;
\Omega (\pi)=\prod  _{p\in \pi}p^{\infty};\;\;\;\;\;\Omega(\pi)|\Omega
;\;\;\;\;\pi\subset \Pi
\end{eqnarray}
If $\pi$ is the empty set, then we use the convention $\Omega(\emptyset)=1$.

\item[(4)]
Let
\begin{eqnarray}
a=\prod _{p\in \Pi}p^{e_p(a)};\;\;\;\;\;b=\prod _{p\in \Pi}p^{e_p(b)}
\end{eqnarray} 
be elements of ${\mathbb N}_S$. Then
\begin{eqnarray}\label{6bn}
&&\varpi (a)=\{p\;|\;1\le e_p(a)\}=\varpi _f(a)\cup \varpi _i(a)\nonumber\\
&&\varpi _f(a)=\{p\;|\;1\le e_p(a)<\infty\}\nonumber\\
&&\varpi _i(a)=\{p\;|\;e_p(a)=\infty\}.
\end{eqnarray} 
We partition here the set $\varpi (a)$ of the primes in the factorization of $a$,
into the set $\varpi _f(a)$ of the primes with finite exponent, and the set $\varpi _i(a)$ of the primes with infinite exponent
(the indices $i$ and $f$ indicate infinite and finite exponents, correpondingly). Also
\begin{eqnarray}
&&\varpi (a>b)=\{p\;|\;e_p(a)> e_p(b)\}\nonumber\\
&&\varpi (a= b)=\{p\;|\;e_p(a)= e_p(b)\}=\Pi-[\varpi (a>b)\cup\varpi (b>a)]\nonumber\\
&&\varpi (a\ge b)=\{p\;|\;e_p(a)\ge e_p(b)\}=\varpi (a>b)\cup \varpi (a=b)=\Pi-\varpi (b>a)\nonumber\\
&&\varpi (a>b)\subseteq \varpi (a);\;\;\;\;\;\varpi (a>b)\cap \varpi _i(b)=\emptyset.
\end{eqnarray} 

\item[(5)]
If $\varpi (a)=\pi$, we say that $a$ is a $\pi$-number.
The $\Omega (\pi)$ is the maximal $\pi$-number in ${\mathbb N}_S$
(with divisibility as partial order), and 
it is a Hall divisor of $\Omega$. Also for 
$\pi \subseteq \varpi(n)$,
the $\prod _{p\in \pi}p^{e_p(n)}$ is a Hall divisor of $n$, and it is the 
maximal $\pi$-number in ${\mathbb D}(n)$.
\item[(6)]
A set $A$ viewed as a lattice (i.e., with the operations $\vee$ and $\wedge$) is denoted as $\Lambda (A)$.
Throughout the paper we have various lattices and for simplicity we use the same symbols $\prec$, $\wedge $, $\vee$ for the `partial order',
`meet' and `join', correspondingly. 
We also use the same symbols  $\cal O$ and $\cal I$ for the smallest and greatest elements in bounded lattices.
\item[(7)]
${\mathbb Z}(n)$ is the additive group of the integers modulo $n$ and it is isomorphic to
the multiplicative group of the $n$-th roots of unity
\begin{eqnarray}\label{2s}
{\cal C}(n)=\{\omega _n(\alpha)\;|\;\alpha \in {\mathbb Z}(n)\}\cong {\mathbb Z}(n);\;\;\;\;\;
\omega _n(\alpha)=\exp \left (\frac {i2\pi \alpha}{n}\right ).
\end{eqnarray}
${\mathbb Z}(1)$ (resp. ${\cal C}(1)$) contains one element which is the $0$ (resp. $1$).
If $m|n$ then  ${\mathbb Z}(m)\prec {\mathbb Z}(n)$ (here $\prec$ indicates `subgroup'). 
\item[(8)]
${\mathbb Z}^*(n)$ is the reduced system of residues modulo $n$. It contains the units 
of ${\mathbb Z}(n)$ (i.e., the elements for which ${\rm GCD}(r,n)=1$). Its cardinality is given by the Euler totient function $\varphi(n)$.

\end{itemize}

\subsection{$p$-adic groups}

${\mathbb Q}_p$ is the field of p-adic numbers and ${\mathbb Z}_p$ the ring of p-adic integers.
${\mathbb Z}_p$ is the inverse limit of ${\mathbb Z}(p^n)$,
and ${\mathbb Q}_p/{\mathbb Z}_p$ is the direct limit of ${\mathbb Z}(p^n)$
\begin{eqnarray}
\lim _{\longleftarrow}{\mathbb Z}(p^n)={\mathbb Z}_p;\;\;\;\;\;\lim _{\longrightarrow}{\mathbb Z}(p^n)={\mathbb Q}_p/{\mathbb Z}_p
\end{eqnarray}
Therefore ${\mathbb Z}_p$ is a profinite group. ${\mathbb Q}_p/{\mathbb Z}_p$ and ${\mathbb Z}_p$ are Pontryagin dual groups to each other.
Also 
\begin{eqnarray}
\widehat {\mathbb Z}=\prod _{p\in \Pi}{\mathbb Z}_p;\;\;\;\;{\mathbb Q}/{\mathbb Z}=\bigoplus _{p\in \Pi}{\mathbb Q}_p/{\mathbb Z}_p
\end{eqnarray}
$\widehat {\mathbb Z}$ is the inverse limit of ${\mathbb Z}(n)$,
and ${\mathbb Q}/{\mathbb Z}$ is the direct limit of ${\mathbb Z}(n)$:
\begin{eqnarray}
\lim _{\longleftarrow}{\mathbb Z}(n)=\widehat {\mathbb Z}
;\;\;\;\;\;\lim _{\longrightarrow}{\mathbb Z}(n)={\mathbb Q}/{\mathbb Z}.
\end{eqnarray}
Therefore ${\widehat {\mathbb Z}}$ is a profinite group. ${\mathbb Q}/{\mathbb Z}$ and ${\widehat {\mathbb Z}}$ are Pontryagin dual groups to each other. 
\begin{remark}
The direct sum $\bigoplus$ is the direct product $\prod$ with the extra condition that in its elements $(a_1,a_2,...)$
all but a finite number of the $a_i$ are equal to zero.
In Pontryagin duality, a direct product of groups becomes the direct sum of their Pontryagin dual groups. 
\end{remark}

The Pr\"ufer group ${\cal C}(p^ {\infty})$ contains 
all $p^n$-th roots of unity (for all $n \in {\mathbb Z}^+$) and it is isomorphic to ${\mathbb Q}_p/{\mathbb Z}_p$:
\begin{eqnarray}
{\cal C}(p^ {\infty})=\{\omega _{p^n}(\alpha_{p^n})|\alpha_{p^n} \in {\mathbb Z}(p^n), n \in {\mathbb Z}^+\}\cong {\mathbb Q}_p/{\mathbb Z}_p
\end{eqnarray}
Its subgroups are the multiplicative cyclic groups ${\cal C}(p^n)$ (which are isomorphic to ${\mathbb Z}(p^n)$):
\begin{eqnarray}\label{2s}
{\cal C}(p)\prec {\cal C}(p^2)\prec ...\prec {\cal C}(p^ {\infty})\cong {\mathbb Q}_p/{\mathbb Z}_p
\end{eqnarray}
These groups (for all primes $p\in \Pi$) are the building blocks of all ${\cal C}(n)$ with $n\in {\mathbb N}_S$, because of the factorization property 
\begin{eqnarray}\label{10}
n=\prod _{p\in \pi} p^{e_p}\;\;\;\;\;{\cal C}(n)=\prod _{p\in \pi} {\cal C}(p^{e_p})
\end{eqnarray}
Working with supernatural numbers, means that the product might contain an infinite number of prime numbers, and that the exponents might be equal to infinity.
For example, the Pr\"ufer group ${\cal C}(\Omega)$  is isomorphic to ${\mathbb Q}/{\mathbb Z}$:
\begin{eqnarray}
{\cal C}(\Omega)=\bigoplus _{p\in \Pi}{\cal C}(p^ {\infty})\cong \bigoplus _{p\in \Pi}{\mathbb Q}_p/{\mathbb Z}_p \cong {\mathbb Q}/{\mathbb Z}
\end{eqnarray}
Also
\begin{eqnarray}
{\cal C}[\Omega (\pi)]= \bigoplus _{p\in \pi}{\mathbb Q}_p/{\mathbb Z}_p ;\;\;\;\;\;\pi \subseteq \Pi.
\end{eqnarray}
For the empty set we use the convention ${\cal C}[\Omega (\emptyset)]={\cal C}(1)$.
The Pontryagin dual group of ${\cal C}[\Omega (\pi)]$ is the
\begin{eqnarray}\label{A19}
\widetilde {\cal C}[\Omega (\pi)]\cong \prod _{p\in \pi}{\mathbb Z}_p.
\end{eqnarray}
A summary of these groups is presented in table 1.

\subsection{Sylow theory for finite and profinite groups}

Let $p$ be a prime number and $\pi$ a set of prime numbers.
\begin{itemize}
\item
A finite or profinite group $G$ is a $\pi$-group if every prime divisor of $|G|$ (which is in general a supernatural number) belongs to $\pi$.
In the special case that $\pi$ contains only one prime $p$, the $G$ is a $p$-group.
For example, the ${\mathbb Z}_p$ is a $p$-group and the $\prod _{p\in \pi}{\mathbb Z}_p$ is a $\pi$-group.
\item
A $\pi$-Hall subgroup $G$ of a finite or profinite group $F$, 
is a $\pi$-group with coprime order $|G|$ and index
$|F:G|$.
$G$ is maximal in the sense that there is no $\pi$-subgroup of $F$ which has $G$ as a proper subgroup.
In the special case that $\pi$ contains only one prime $p$, the $G$ is a $p$-Sylow subgroup of $F$.
For example,
the ${\mathbb Z}_p$ is a $p$-Sylow subgroup of the profinite group ${\widehat {\mathbb Z}}$.
Also the $\prod _{p\in \pi}{\mathbb Z}_p$ is a $\pi$-Hall subgroup of ${\widehat {\mathbb Z}}$.

\end{itemize}

\subsection{Heyting algebras and Boolean algebras}
A special case of distributive lattices are the Heyting algebras (or Brouwer lattices).
A special case of the Heyting algebras are the Boolean algebras.
All finite distributive lattices are Heyting algebras.
There are variations in the definitions of these terms in the literature, and below we give the definitions we adopt.
\begin{definition}
\mbox{}
\begin{itemize}
\item[(1)]
A Heyting algebra is a bounded lattice in which for any 
elements $a,b$, the set of all elements $x$
which obey the relation $a\wedge x\prec b$ has a greatest element, which is denoted as $(a\Rightarrow b)$.
The $\Rightarrow $ is called implication or relative pseudocomplement  
and it is a non-commutative and non-associative binary operation. 
\item[(2)]
The $a\Rightarrow {\cal O}$ is called pseudocomplement of $a$ (negation of $a$) and is denoted as $\neg a$. 
By definition $\neg a$ is the greatest element such that $a\wedge \neg a={\cal O}$.
\item[(3)]
The equivalence connective $\Leftrightarrow$ is defined as
\begin{eqnarray}
(a\Leftrightarrow b)=(a\Rightarrow b)\wedge (b\Rightarrow a).
\end{eqnarray}
\end{itemize}
\end{definition}
The following propositions are known and we give them without proof\cite{BIR,BIR1,BIR2,BIR3}:
\begin{proposition}
\mbox{}
\begin{itemize}
\item[(1)]
In a Heyting algebra $(a\Rightarrow b)={\cal I}$ if and only if $a\prec b$.
\item[(2)]
The elements of Heyting algebras obey the following relations which involve implications
\begin{eqnarray}
&&(a\Rightarrow a)={\cal I}\\
&&a\wedge (a\Rightarrow b)=a\wedge b\\
&&b\wedge (a\Rightarrow b)=b\\
&&[a\Rightarrow(b\wedge c)]=(a\Rightarrow b)\wedge (a\Rightarrow c)
\end{eqnarray}

\item[(3)]
The elements of Heyting algebras obey the following relations which involve pseudocomplements
\begin{eqnarray}
&&a \prec \neg \neg a\label{14}\\
&&\neg a =\neg\neg \neg a\\
&&\neg(a\vee b)=(\neg a) \wedge (\neg b)\label{15a}\\
&&\neg (a\wedge b)=\neg \neg(\neg a \vee \neg b)\label{15}\\
&& \neg {\cal O}={\cal I};\;\;\;\;\;\;\neg {\cal I}={\cal O}
\end{eqnarray}
Eq.(\ref{15a}) is the first de Morgan relation, and Eq.(\ref{15}) is a weak version of the second de Morgan relation.

\end{itemize}
\end{proposition}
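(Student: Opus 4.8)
The plan is to derive every identity from the one property that \emph{defines} the implication: since $(a\Rightarrow b)$ is the greatest $x$ with $a\wedge x\prec b$, one has for all $c$ the residuation equivalence
\[
c\prec (a\Rightarrow b)\quad\Longleftrightarrow\quad a\wedge c\prec b. \qquad (\ast)
\]
The forward direction uses $a\wedge(a\Rightarrow b)\prec b$ together with monotonicity of $\wedge$; the backward direction is immediate from maximality. Because a Heyting algebra is a distributive lattice, I may also freely use $(a\vee b)\wedge x=(a\wedge x)\vee(b\wedge x)$. I would record at the outset that $\neg$ is \emph{order-reversing} (if $a\prec a'$ then $a\wedge\neg a'\prec a'\wedge\neg a'={\cal O}$, so by $(\ast)$ one gets $\neg a'\prec\neg a$) and that $\neg a=(a\Rightarrow{\cal O})$.

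Parts (1) and (2) then fall out mechanically. For (1): if $a\prec b$ then $a\wedge{\cal I}=a\prec b$ gives ${\cal I}\prec(a\Rightarrow b)$ by $(\ast)$, and conversely $(a\Rightarrow b)={\cal I}$ feeds ${\cal I}$ into $(\ast)$ to yield $a\prec b$. The identity $(a\Rightarrow a)={\cal I}$ is the case $b=a$; the relation $b\wedge(a\Rightarrow b)=b$ (i.e.\ $b\prec(a\Rightarrow b)$) is $(\ast)$ applied to $a\wedge b\prec b$; and $a\wedge(a\Rightarrow b)=a\wedge b$ follows from $a\wedge(a\Rightarrow b)\prec a\wedge b$ (definition) together with $a\wedge b\prec(a\Rightarrow b)$ (again $(\ast)$, since $a\wedge(a\wedge b)\prec b$). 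The distribution law $[a\Rightarrow(b\wedge c)]=(a\Rightarrow b)\wedge(a\Rightarrow c)$ I would prove by comparing lower sets: for every $x$, $(\ast)$ turns $x\prec[a\Rightarrow(b\wedge c)]$ into $a\wedge x\prec b\wedge c$, i.e.\ into the conjunction of $a\wedge x\prec b$ and $a\wedge x\prec c$, which is exactly $x\prec(a\Rightarrow b)\wedge(a\Rightarrow c)$; equal lower sets force equal elements.

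For (3) I would first get $a\prec\neg\neg a$ from $(\ast)$ applied to $a\wedge\neg a={\cal O}$, then obtain triple negation $\neg a=\neg\neg\neg a$ by combining $a\prec\neg\neg a$ (which gives $\neg\neg\neg a\prec\neg a$ after one order-reversal) with the instance of $a\prec\neg\neg a$ at $\neg a$ (which gives $\neg a\prec\neg\neg\neg a$). The bounds $\neg{\cal O}={\cal I}$ and $\neg{\cal I}={\cal O}$ are immediate from part (1) and from the definition. The first de Morgan law is the lower-set argument of part (2) rerun with the outer meet replaced by a join: using distributivity, $x\prec\neg(a\vee b)$ unfolds via $(\ast)$ to $(a\wedge x)\vee(b\wedge x)\prec{\cal O}$, i.e.\ to $a\wedge x\prec{\cal O}$ and $b\wedge x\prec{\cal O}$, which reads $x\prec\neg a\wedge\neg b$.

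The one genuinely nontrivial step, and the place I expect to spend real effort, is the weak second de Morgan law. I would reduce it to the single lemma $\neg\neg(a\wedge b)=\neg\neg a\wedge\neg\neg b$: granting this, triple negation gives $\neg(a\wedge b)=\neg\neg\neg(a\wedge b)=\neg(\neg\neg a\wedge\neg\neg b)$, and applying the already-proved first de Morgan law in the form $\neg\neg a\wedge\neg\neg b=\neg(\neg a\vee\neg b)$ rewrites the right-hand side as $\neg\neg(\neg a\vee\neg b)$, as required. The inequality $\neg\neg(a\wedge b)\prec\neg\neg a\wedge\neg\neg b$ is just monotonicity of $\neg\neg$. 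The reverse inequality is the crux: by $(\ast)$ it amounts to showing $t:=\neg(a\wedge b)\wedge\neg\neg a\wedge\neg\neg b={\cal O}$, which I would establish by a short cascade — from $a\wedge b\wedge t\prec(a\wedge b)\wedge\neg(a\wedge b)={\cal O}$ one gets $a\wedge t\prec\neg b$ via $(\ast)$, hence $a\wedge t\prec\neg b\wedge\neg\neg b={\cal O}$, hence $t\prec\neg a$, hence $t\prec\neg a\wedge\neg\neg a={\cal O}$. Verifying that each implication here is a legitimate use of $(\ast)$ and of $x\wedge\neg x={\cal O}$ is the only delicate bookkeeping in the whole proposition.
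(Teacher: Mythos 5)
Your proof is correct, but there is nothing in the paper to compare it against: the paper explicitly states this proposition as a known result, ``given without proof,'' with citations to the standard lattice-theory literature (Birkhoff, Szasz, Gratzer, Rutherford). What you have written is therefore a self-contained reconstruction of the textbook argument. Its engine is the residuation equivalence $(\ast)$: $c\prec (a\Rightarrow b)$ iff $a\wedge c\prec b$, which you derive correctly from the definition of $(a\Rightarrow b)$ as a greatest element, and from which parts (1) and (2) indeed fall out mechanically (one pedantic gap: in proving $a\wedge(a\Rightarrow b)=a\wedge b$ you should say explicitly that $a\wedge(a\Rightarrow b)\prec b$ must be met with the trivial $a\wedge(a\Rightarrow b)\prec a$ to get $\prec a\wedge b$, and symmetrically on the other side; these are one-line fixes). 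The genuinely nontrivial item is the weak second de Morgan law, and your treatment of it is sound: the reduction to the lemma $\neg\neg(a\wedge b)=\neg\neg a\wedge \neg\neg b$ works because $\neg(a\wedge b)=\neg\neg\neg(a\wedge b)=\neg(\neg\neg a\wedge\neg\neg b)=\neg\neg(\neg a\vee\neg b)$ uses only triple negation, the lemma, and the already-proved first de Morgan law applied to $\neg a,\neg b$; and your cascade showing $t:=\neg(a\wedge b)\wedge\neg\neg a\wedge\neg\neg b={\cal O}$ is a legitimate chain, since each step is an instance of $(\ast)$ (turning $b\wedge(a\wedge t)\prec{\cal O}$ into $a\wedge t\prec\neg b$, and $a\wedge t\prec {\cal O}$ into $t\prec\neg a$) or of $x\wedge\neg x={\cal O}$. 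The argument is complete and uses nothing beyond the paper's own definition of a Heyting algebra together with distributivity, which the paper assumes; so the proposal fills in, rather than deviates from, what the paper left to the references.
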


\paragraph*{Stone lattices:}
A Stone lattice is a Heyting algebra in which the following equivalent to each other relations, hold:
\begin{eqnarray}
&&\neg a \vee \neg \neg a={\cal I}\label{16}\\
&&\neg (a\wedge b)=\neg a \vee \neg b\label{15A}
\end{eqnarray}
Eq.(\ref{15A}) is the second de Morgan relation, and it holds in Stone lattices.
All Heyting algebras considered later are Stone lattices, and this is needed in the proof of proposition \ref{ineq} below.

\paragraph*{Boolean algebras:} 
A Boolean algebra is a Heyting algebra in which the following equivalent to each other relations, hold:
\begin{eqnarray}
&&a =\neg \neg a\label{17}\\
&&\neg a \vee a={\cal I}\label{19}
\end{eqnarray}
The second relation expresses the law of the excluded middle.
Since these two relations are equivalent, when either of them is not valid, the law of the excluded middle is not valid.
\begin{remark}
In a Boolean algebra
\begin{eqnarray}
&&(a\Rightarrow b)=(\neg a)\vee b\nonumber\\
&&(a\Leftrightarrow b)=[(\neg a)\vee b]\wedge[(\neg b)\vee a]=(a\wedge b)\vee [(\neg a)\wedge (\neg b)],
\end{eqnarray}
while in a Heyting algebra
\begin{eqnarray}
&&(\neg a)\vee b \prec (a\Rightarrow b)\nonumber\\
&&[(\neg a)\vee b]\wedge[(\neg b)\vee a]=(a\wedge b)\vee [(\neg a)\wedge (\neg b)]\prec (a\Leftrightarrow b).
\end{eqnarray}
\end{remark}
\begin{proposition}
The subset of elements of a Heyting algebra which obey the relation $a= \neg \neg a$
form a Boolean algebra with meet $\wedge$ and join $\widehat \vee$ defined as
\begin{eqnarray}
a\widehat\vee b=\neg \neg(a \vee b)
\end{eqnarray}
\end{proposition}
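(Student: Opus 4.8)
The plan is to show that $B=\{a : a=\neg\neg a\}$, the set of regular elements, is closed under the two proposed operations, that $\widehat\vee$ is genuinely the join in the induced poset $(B,\prec)$, and finally that $(B,\wedge,\widehat\vee)$ is a bounded complemented distributive lattice, hence Boolean. Throughout I would lean on the pseudocomplement relations stated above, especially the triple-negation law $\neg a=\neg\neg\neg a$, the first de Morgan law $\neg(a\vee b)=\neg a\wedge\neg b$, and the weak second de Morgan law $\neg(a\wedge b)=\neg\neg(\neg a\vee\neg b)$.

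The technical heart, which I would isolate first as a lemma, is the identity $\neg\neg(a\wedge b)=\neg\neg a\wedge\neg\neg b$. I would derive it by applying $\neg$ to the weak de Morgan law to get $\neg\neg(a\wedge b)=\neg\neg\neg(\neg a\vee\neg b)$, collapsing the triple negation via $\neg\neg\neg x=\neg x$ with $x=\neg a\vee\neg b$, and then invoking the first de Morgan law $\neg(\neg a\vee\neg b)=\neg\neg a\wedge\neg\neg b$. This immediately gives closure of $B$ under $\wedge$: if $a=\neg\neg a$ and $b=\neg\neg b$ then $\neg\neg(a\wedge b)=a\wedge b$. Closure under $\widehat\vee$ is even easier, since any element of the form $\neg\neg y$ is already regular: applying $\neg\neg$ and using $\neg\neg\neg x=\neg x$ gives $\neg\neg(\neg\neg y)=\neg\neg y$, so $a\widehat\vee b=\neg\neg(a\vee b)\in B$ for all $a,b$.

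Next I would settle the order-theoretic claims. The bounds lie in $B$, since $\neg\neg{\cal O}=\neg{\cal I}={\cal O}$ and dually $\neg\neg{\cal I}={\cal I}$. Because $\neg$ is order-reversing — which follows directly from the definition of the pseudocomplement, as $a\prec b$ forces every $x$ with $b\wedge x={\cal O}$ to satisfy $a\wedge x={\cal O}$ — the map $\neg\neg$ is monotone. That $\widehat\vee$ is the join in $B$ then follows: the chain $a\prec a\vee b\prec\neg\neg(a\vee b)$ (and its analogue for $b$) shows it is an upper bound, while for any regular upper bound $c$ we have $a\vee b\prec c$, whence $\neg\neg(a\vee b)\prec\neg\neg c=c$, so it is the least one. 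The meet in $B$ is simply $\wedge$ inherited from the ambient lattice, so $(B,\wedge,\widehat\vee)$ is a bounded lattice.

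Finally I would establish distributivity and complementation. For distributivity it suffices to verify $a\wedge(b\widehat\vee c)=(a\wedge b)\widehat\vee(a\wedge c)$: expanding the right-hand side with the distributivity of the ambient Heyting algebra yields $\neg\neg(a\wedge(b\vee c))$, and the lemma together with $a=\neg\neg a$ rewrites this as $a\wedge\neg\neg(b\vee c)$, which is exactly the left-hand side. For complementation I would take $\neg a$ as the complement of $a\in B$: it is regular by $\neg\neg\neg a=\neg a$, it satisfies $a\wedge\neg a={\cal O}$ by definition, and $a\widehat\vee\neg a=\neg\neg(a\vee\neg a)=\neg(\neg a\wedge\neg\neg a)=\neg(\neg a\wedge a)=\neg{\cal O}={\cal I}$, using the first de Morgan law and $\neg\neg a=a$. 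A bounded distributive lattice in which every element has a complement is a Boolean algebra, which completes the argument. I expect the only genuine obstacle to be the correct handling of the double-negation lemma and the care needed to confirm that $\widehat\vee$ is the actual lattice join on $B$ rather than merely a derived operation; once these are in place the Boolean structure falls out mechanically.
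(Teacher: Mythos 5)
Your proof is correct. Note, however, that the paper does not prove this proposition at all: it is stated in the preliminaries as a known result (the classical Glivenko--Stone theorem on regular elements), with the lattice-theory references \cite{BIR,BIR1,BIR2,BIR3} standing in for a proof. So there is no paper argument to compare against; your proposal supplies exactly what the paper omits. Your route is the standard one, and all the ingredients you use are legitimately available from the paper's stated facts: the key lemma $\neg\neg(a\wedge b)=\neg\neg a\wedge \neg\neg b$ follows, as you say, by negating the weak de Morgan law $\neg(a\wedge b)=\neg\neg(\neg a\vee\neg b)$ and collapsing the triple negation, closure and the join property of $\widehat\vee$ follow from $a\prec\neg\neg a$, $\neg a=\neg\neg\neg a$ and the antitonicity of $\neg$, and the distributivity and complementation computations are clean. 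The only point worth flagging is a definitional mismatch, harmless but worth one sentence in a polished write-up: the paper defines a Boolean algebra as a Heyting algebra satisfying $a=\neg\neg a$ (equivalently $a\vee\neg a={\cal I}$), whereas you conclude via the characterization ``complemented bounded distributive lattice''. These are equivalent (in such a lattice $\neg a\,\widehat\vee\, b$ serves as the relative pseudocomplement, so the structure is Heyting), and in fact your verification that $a\,\widehat\vee\,\neg a=\neg(\neg a\wedge\neg\neg a)=\neg{\cal O}={\cal I}$ already establishes the law of the excluded middle in the form the paper's definition asks for, so the gap closes immediately.
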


\section{The supernatural numbers as Heyting algebra}

\subsection{$\Lambda ({\mathbb N}_S)$ as a complete Heyting algebra}

It is known \cite{BIR} that ${\mathbb N}$  with divisibility as partial order, and
\begin{eqnarray}\label{890}
m\wedge n={\rm GCD}(m,n);\;\;\;\;\;m\vee n={\rm LCM}(m,n),
\end{eqnarray} 
is a distributive lattice which we denote as $\Lambda ({\mathbb N})$.
$\Lambda ({\mathbb N}_S)$ is a distributive lattice
with $\vee$ and $\wedge$ given by Eq.(\ref{890}), with the generalized greatest common divisor
and the generalized least common multiplier given in Eq.(\ref{w2}). 
The lattice $\Lambda ({\mathbb N}_S)$ is bounded with ${\cal O}=1$ and ${\cal I}=\Omega$.
In fact $\Lambda ({\mathbb N}_S)$ is a complete lattice.
\begin{example}\label{ex}
\begin{eqnarray}\label{79}
&&n\wedge \Omega(\pi )=n_1\Omega [\varpi _i(n)\cap \pi];\;\;\;\;\;\;n_1=\prod _{\varpi _f(n)\cap \pi}p^{e_p(n)}\nonumber\\
&&n\vee \Omega(\pi )=n_2\Omega [\varpi _i(n)\cup \pi];\;\;\;\;\;\;n_2=\prod _{\varpi _f(n)-\pi}p^{e_p(n)}.
\end{eqnarray}
\end{example}

\begin{definition}
${\mathbb N}_S^B$ (where B indicates Boolean) is the set
\begin{eqnarray}
{\mathbb N}_S^B=\{\Omega(\pi)\;|\;\pi \subseteq \Pi\}\subset {\mathbb N}_S
\end{eqnarray}
$1\in {\mathbb N}_S^B$ corresponding (by convention) to the case that $\pi$ is the empty set.
Also $\Omega \in {\mathbb N}_S^B$.
The set ${\mathbb N}_S^B$ contains all the Hall divisors of $\Omega$, i.e., all the maximal $\pi$-numbers in ${\mathbb N}_S$ (for all $\pi \subseteq \Pi$). 

\end{definition}
\begin{proposition}
\mbox{}
\begin{itemize}
\item[(1)]
$\Lambda ({\mathbb N}_S)$ is a complete Heyting algebra, but it is not a Boolean algebra.
Its sublattice  $\Lambda ({\mathbb N}_S^B)$ is a Boolean algebra, and its join $\widehat \vee$ is the same as $\vee$.
\item[(2)]
$\Lambda ({\mathbb N}_S)$ is a Stone lattice.
\end{itemize}
\end{proposition}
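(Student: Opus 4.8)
The plan is to establish each claim by working directly with the exponent description of supernatural numbers, since every lattice operation reduces to coordinatewise operations on the exponents $e_p$. Because $\Lambda({\mathbb N}_S)$ is a complete distributive lattice with ${\cal O}=1$ and ${\cal I}=\Omega$, to show it is a complete Heyting algebra I would verify that for any $a,b$ the relative pseudocomplement $(a\Rightarrow b)$ exists, i.e. the set $\{x: a\wedge x\prec b\}$ has a greatest element. The standard route is to observe that in any complete lattice satisfying the infinite distributive law $a\wedge(\bigvee_i x_i)=\bigvee_i(a\wedge x_i)$, the join of all such $x$ is itself such an $x$, giving the required greatest element; so the real content is checking this infinite distributivity, which follows prime-by-prime from $\min(e_p(a),\sup_i e_p(x_i))=\sup_i\min(e_p(a),e_p(x_i))$ in ${\mathbb Z}_0^+\cup\{\infty\}$.

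Next I would compute $\neg a = (a\Rightarrow 1)$ explicitly: since $a\wedge x=1$ forces $e_p(x)=0$ on $\varpi(a)$ and is otherwise unconstrained, the greatest such $x$ is $\Omega(\Pi-\varpi(a))$, i.e. $\neg a=\Omega(\pi')$ where $\pi'$ is the complement of the prime support of $a$. From this formula everything else falls out by inspection of prime supports. To see $\Lambda({\mathbb N}_S)$ is \emph{not} Boolean, take any $a\ne 1,\Omega$ with a prime $p$ occurring to finite nonzero exponent; then $\neg\neg a=\Omega(\varpi(a))\ne a$, so Eq.(\ref{17}) fails. For the sublattice, note $\neg a$ always lands in ${\mathbb N}_S^B$ and $\neg\neg\Omega(\pi)=\Omega(\pi)$, so the elements fixed by $\neg\neg$ are exactly the $\Omega(\pi)$; by the earlier proposition these form a Boolean algebra under $\wedge$ and $\widehat\vee$. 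Finally I would check $\Omega(\pi_1)\vee\Omega(\pi_2)=\Omega(\pi_1\cup\pi_2)=\neg\neg(\Omega(\pi_1)\vee\Omega(\pi_2))=\Omega(\pi_1)\widehat\vee\Omega(\pi_2)$, confirming that on ${\mathbb N}_S^B$ the two joins coincide.

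For the Stone property (part 2) I would verify Eq.(\ref{16}), $\neg a\vee\neg\neg a={\cal I}=\Omega$. Using $\neg a=\Omega(\Pi-\varpi(a))$ and $\neg\neg a=\Omega(\varpi(a))$, their join has prime support $(\Pi-\varpi(a))\cup\varpi(a)=\Pi$ with all exponents infinite, which is exactly $\Omega$. Since Eq.(\ref{16}) and Eq.(\ref{15A}) are stated to be equivalent, this suffices.

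The main obstacle, and the only step requiring genuine care rather than bookkeeping, is the infinite distributive law underpinning the existence of relative pseudocomplements: completeness alone does not give a Heyting algebra, and I must confirm that arbitrary joins (not merely finite ones) distribute over meets. The subtlety is handling the $\infty$ exponents and infinite prime supports, where I would rely on the conventions for $\sup$, $\min$, and $\max$ in ${\mathbb Z}_0^+\cup\{\infty\}$ already fixed in the Notation. Once that identity is in hand, all remaining claims are immediate consequences of the explicit prime-support formula for $\neg$.
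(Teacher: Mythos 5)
Your proof is correct, but it reaches the central claim by a genuinely different route than the paper. Where you establish the existence of $(a\Rightarrow b)$ abstractly — via the frame-theoretic fact that a complete lattice satisfying the infinite distributive law $a\wedge\left(\bigvee_i x_i\right)=\bigvee_i(a\wedge x_i)$ admits relative pseudocomplements $(a\Rightarrow b)=\bigvee\{x\;:\;a\wedge x\prec b\}$, with the distributive law verified prime-by-prime in the complete chain ${\mathbb Z}_0^+\cup\{\infty\}$ — the paper instead exhibits the implication in closed form, Eq.(\ref{100}): $(a\Rightarrow b)=\Omega\left[\varpi(b\ge a)\right]\prod_{p\in\varpi(a>b)}p^{e_p(b)}$, and reads everything else off that formula. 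Your argument is cleaner as an existence proof and isolates exactly what is needed (completeness plus infinite distributivity); the paper's explicit formula costs more bookkeeping up front but pays for itself later, since Eq.(\ref{100}) is reused to compute the implications in $\Lambda ({\mathfrak Z}_S)$ and $\Lambda (\widetilde {{\mathfrak Z}_S})$ (Eqs.(\ref{100B}),(\ref{100BB})) and to establish the strict comparison $(\neg a)\vee b \prec (a\Rightarrow b)$; your existence argument alone would leave those later computations needing a separate derivation. The remaining steps agree in substance: both proofs compute $\neg a=\Omega(\Pi-\varpi(a))$ and obtain the Stone property from $\neg a\vee\neg\neg a=\Omega$; for non-Booleanness you falsify $a=\neg\neg a$ (Eq.(\ref{17})) at an element with a finite nonzero exponent, while the paper falsifies the excluded middle, $p\vee\neg p\ne{\cal I}$ (Eq.(\ref{19})) — equivalent checks, by the paper's own statement of their equivalence. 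One small point in your favour: you verify that the $\neg\neg$-fixed points are exactly ${\mathbb N}_S^B$, which is what is strictly needed to invoke the earlier proposition on Boolean subalgebras, whereas the paper only checks the inclusion of ${\mathbb N}_S^B$ in the fixed-point set.
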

\begin{proof}
\mbox{}
\begin{itemize}
\item[(1)]
We show that
\begin{eqnarray}\label{100}
&&(a\Rightarrow b)=\Omega \left [\varpi (b\ge a)\right ]\prod _{p\in \varpi (a>b)}p^{e_p(b)}.
\end{eqnarray}
Therefore, for any 
elements $a,b\in {\mathbb N}_S$, the $a\Rightarrow b$ exists and consequently $\Lambda({\mathbb N}_S)$ is a Heyting algebra.
Eq.(\ref{100}) shows that if $a$ is a $\pi$-number, then $\neg a=\Omega (\Pi-\pi)$. 

Heyting algebras are Boolean algebras if all elements satisfy Eq.(\ref{19}).
We easily find elements which do not obey this relation.
For example, for any prime $p$
\begin{eqnarray}
&&\neg p=\Omega (\pi );\;\;\;\;\pi=\Pi-\{p\}\\
&&p\vee \neg p=p\Omega (\pi )\ne {\cal I}
\end{eqnarray} 
This shows that $\Lambda({\mathbb N}_S)$ is not a Boolean algebra.
We note that all the $p^n$ with $n=1,2,...$ have the same pseudocomplement.

In order to show that $\Lambda ({\mathbb N}_S^B)$ is a Boolean algebra,
it is sufficient to show that all its elements satisfy Eq.(\ref{17}) (or Eq.(\ref{19})).
It is easily seen that $\neg \Omega (\pi)=\Omega (\Pi-\pi)$ and from this follows that $\neg \neg \Omega (\pi)=\Omega (\pi)$.
In this case $\Omega (\pi _1) \vee \Omega (\pi _2)=\Omega (\pi_1\cup \pi_2)$ and it follows that the join $\widehat \vee$ is the same as $\vee$.
\item[(2)]
We have seen earlier, that if $a$ is a $\pi$-number, then $\neg a=\Omega (\Pi-\pi)$.
Consequently $\neg \neg a=\Omega (\pi)$ and 
$\neg a\vee\neg \neg a={\cal I}$. Therefore $\Lambda({\mathbb N}_S)$ is a Stone lattice.
\end{itemize}
\end{proof}

\begin{remark}
We make a number of comments which will be extended to groups and to quantum systems later.
\begin{itemize}
\item[(1)]
$\neg a$ is the maximal number in ${\mathbb N}_S$ which is coprime to $a$.
Therefore if $a$ is a $\pi$-number, the $\neg a$ is the maximal $(\Pi-\pi)$-number, and the
$\neg \neg a$ is the maximal $\pi$-number.
\item[(2)]
$(a\Rightarrow b)$ is the maximal number in ${\mathbb N}_S$ which is `partly coprime' to $a$,
in the sense that 
\begin{eqnarray}
{\rm GCD}[a,(a\Rightarrow b)]\prec b.
\end{eqnarray} 
We prove that
\begin{eqnarray}\label{100A}
&&(\neg a)\vee b=\Omega (\pi)\prod _{p\in \varpi _f(b)\cap \varpi(a)}p^{e_p(b)}\nonumber\\
&&\pi=[\Pi-\varpi (a)]\cup \varpi _i(b).
\end{eqnarray}
Comparison with Eq.(\ref{100}), using the $\pi \subseteq \varpi (b\ge a)$, shows that 
\begin{eqnarray}\label{M1}
(\neg a)\vee b \prec (a\Rightarrow b).
\end{eqnarray}
In a Boolean algebra these two quantities would have been equal to each other.

\item[(3)]
$(a\Leftrightarrow b)$ is the maximal number in ${\mathbb N}_S$ which is `partly coprime' to both $a,b$,
in the sense that 
\begin{eqnarray}
{\rm GCD}[a,(a\Leftrightarrow b)]\prec b;\;\;\;\;\;{\rm GCD}[b,(a\Leftrightarrow b)]\prec a.
\end{eqnarray} 
We prove that
\begin{eqnarray}\label{100C}
(a\Leftrightarrow b)=\Omega \left [\varpi (a= b)\right ]\prod _{p\in \varpi (a>b)}p^{e_p(b)}\prod _{p\in \varpi (b>a)}p^{e_p(a)}.
\end{eqnarray}

\end{itemize}

\end{remark}

\paragraph*{The nonvalidity of the law of the excluded middle in $\Lambda \left ({{\mathbb N}_S}\right )$ :}
If $b$ is a $\pi$-number, then 
$\neg\neg b=\Omega (\pi)$
is maximal $\pi$-number in ${\mathbb N}_S$. Therefore
$b \prec \neg \neg b$ and consequently
the law of the excluded middle is not valid.
The existence of $\pi$-numbers which are not maximal $\pi$-number in ${\mathbb N}_S$,
is intimately linked to the nonvalidity of the law of the excluded middle.
In the terminology of logic, the `true-false' terms are assigned to maximal $\pi$-numbers $\Omega (\pi)$ while the other $\pi$-numbers are the `middle'.

\subsection{The finite Heyting algebra $\Lambda [{\mathbb D}(n)]$}

$\Lambda [{\mathbb D}(n)]$ with $n\in {\mathbb N}$ is a finite distributive lattice and as such it is a complete Heyting algebra
with ${\cal O}=1$ and ${\cal I}=n$. It is also a Stone lattice.
The formulas here are similar to those for $\Lambda ({\mathbb N}_S)$, with 
\begin{eqnarray}
p^\infty\;\rightarrow \;p^{e_p(n)};\;\;\;\;\;\;\Omega (\pi)\;\rightarrow \;\prod _{p\in \pi}p^{e_p(n)};\;\;\;\;\;\;\;\Pi\;\rightarrow \;\varpi (n).
\end{eqnarray}
For example:
\begin{eqnarray}\label{ZQN}
&&\neg a=\prod _{p\in \varpi (n)-\varpi (a)}p^{e_p(n)}\nonumber\\
&&(a\Rightarrow b)=\prod _{p\in \varpi (b\ge a)}p^{e_p(n)}\prod _{p\in \varpi (a>b)}p^{e_p(b)}\nonumber\\
&&(a\Leftrightarrow b)=\prod _{p\in \varpi (a= b)}p^{e_p(n)}\prod _{p\in \varpi (a>b)}p^{e_p(b)}\prod _{p\in \varpi (b>a)}p^{e_p(a)}.
\end{eqnarray}
Its subset
\begin{eqnarray}
\Lambda [{\mathbb D}^B(n)]=\left \{\prod _{p\in \pi}p^{e_p(n)}\;|\;\pi \subseteq \varpi(n)\right \},
\end{eqnarray}
is a Boolean algebra.
The set ${\mathbb D}^B(n)$ contains all maximal $\pi$-numbers in ${\mathbb D}(n)$ (for all $\pi \subseteq \varpi(n)$) and also $1$. 
If all exponents $e_p(n)=1$ then ${\mathbb D}^B(n)={\mathbb D}(n)$.

\section{The complete Heyting algebras $\Lambda ({\mathfrak Z}_S)$ and $\Lambda (\widetilde {{\mathfrak Z}_S})$}

In quantum mechanics we have an Abelian group $G$ where the variable `position' takes values,
and its Pontryagin dual group $\widetilde G$ where the variable `momentum' takes values.
We consider the following sets of `groups of positions' and `groups of momenta' 
\begin{eqnarray}
&&{\mathfrak Z}_S=\{{\cal C}(m)\;|\;m\in {\mathbb N}_S\}\nonumber\\
&&\widetilde {{\mathfrak Z}_S}=\{\widetilde {\cal C}(m)\;|\;m\in {\mathbb N}_S\}.
\end{eqnarray}
In this section we study these sets as lattices and in particular as complete Heyting algebras which are isomorphic to each other.
All elements of $\widetilde {{\mathfrak Z}_S}$ are finite or profinite groups, and we can apply Sylow theory.
In particular the 
\begin{eqnarray}
\widetilde {\cal C}(p)\cong {\mathbb Z}(p),\;\;\;\widetilde {\cal C}(p^2)\cong {\mathbb Z}(p^2),...,\widetilde {\cal C}(p^\infty)\cong {\mathbb Z}_p
\end{eqnarray}
are $p$-groups, and the ${\mathbb Z}_p$ is a $p$-Sylow group.
More generally if $m$ is a $\pi$-number then the
\begin{eqnarray}
\widetilde {\cal C}(m)\cong \prod _{p\in \pi}{\mathbb Z}(p^{e_p}),\;\;\;\widetilde {\cal C}(m^2)\cong \prod _{p\in \pi}{\mathbb Z}(p^{2e_p}),...,
\widetilde {\cal C}(m^\infty)\cong \prod _{p\in \pi}{\mathbb Z}_p;\;\;\;\;
m=\prod _{p\in \pi}p^{e_p}
\end{eqnarray}
are $\pi$-groups, and the $\prod _{p\in \pi}{\mathbb Z}_p$ is a $\pi$-Hall group.

We note that the ${\mathfrak Z}_S$ contains elements 
which are not finite or profinite groups (e.g. the ${\cal C}(p^\infty)\cong {\mathbb Q}_p/{\mathbb Z}_p$).

\subsection{The complete Heyting algebra $\Lambda ({\mathfrak Z}_S)$}

The set ${\mathfrak Z}_S$ is a directed partially ordered set with `subgroup' as partial order.
Actually it is a distributive lattice with
\begin{eqnarray}\label{T5}
{\cal C}(m)\wedge {\cal C}(n)={\cal C}(m\wedge n);\;\;\;\;\;\;{\cal C}(m)\vee {\cal C}(n)={\cal C}(m\vee n),
\end{eqnarray}
which we denote as $\Lambda ({\mathfrak Z}_S)$.
Here ${\cal C}(m\wedge n)$ is the largest group in the set ${\mathfrak Z}_S$, which is a subgroup of both ${\cal C}(m)$ and ${\cal C}(n)$,
and ${\cal C}(m\vee n)$ is the smallest group in the set ${\mathfrak Z}_S$, which has both ${\cal C}(m)$ and ${\cal C}(n)$ as subgroups.

$\Lambda ({\mathfrak Z}_S)$ is a bounded lattice with 
\begin{eqnarray}\label{bt}
{\cal O}={\mathbb Z}(1);\;\;\;\;\;{\cal I}={\cal C}(\Omega)\cong {\mathbb Q}/{\mathbb Z}, 
\end{eqnarray}
and it contains the subgroups of ${\cal C}(\Omega)\cong {\mathbb Q}/{\mathbb Z}$.

\begin{example}\label{ex2}
Taking into account example \ref{ex}, we find that 
for $n\in {\mathbb N}_S$
\begin{eqnarray}
{\mathbb Z}(n)\wedge \bigoplus _{p\in \pi}({\mathbb Q}_p/{\mathbb Z}_p)={\mathbb Z}(n_1)\bigoplus _{p\in \varpi _i(n)\cap \pi}({\mathbb Q}_p/{\mathbb Z}_p)\nonumber\\
{\mathbb Z}(n)\vee \bigoplus _{p\in \pi}({\mathbb Q}_p/{\mathbb Z}_p)
=\mathbb Z(n_2)\bigoplus _{p\in \varpi _i(n)\cup \pi}({\mathbb Q}_p/{\mathbb Z}_p),
\end{eqnarray}
where $n_1$, $n_2$ have been defined in Eq.(\ref{79}).
\end{example}

$\Lambda ({\mathfrak Z}_S)$ is isomorphic to $\Lambda ({\mathbb N}_S)$ and therefore:
\begin{proposition}
$\Lambda ({\mathfrak Z}_S)$ is a complete Heyting algebra (and a Stone lattice).
\end{proposition}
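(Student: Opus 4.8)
The plan is to exploit the remark, made just above the statement, that $\Lambda({\mathfrak Z}_S)$ and $\Lambda({\mathbb N}_S)$ are isomorphic. Since ``complete Heyting algebra'' and ``Stone lattice'' are properties defined purely in terms of the partial order together with the lattice operations and the bounds, they are invariant under lattice isomorphism, and the previous proposition already established both properties for $\Lambda({\mathbb N}_S)$. Thus the only genuine task is to verify that the map $\Phi\colon {\mathbb N}_S\to{\mathfrak Z}_S$, $\Phi(m)={\cal C}(m)$, is a lattice isomorphism, and then to transport the structure across it.

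First I would check that $\Phi$ is a bijection. Surjectivity is immediate from the definition ${\mathfrak Z}_S=\{{\cal C}(m)\mid m\in{\mathbb N}_S\}$. For injectivity I would use the factorization ${\cal C}(n)=\prod_{p}{\cal C}(p^{e_p(n)})$ together with the strict chain ${\cal C}(p)\prec{\cal C}(p^2)\prec\cdots\prec{\cal C}(p^\infty)$: the $p$-component of ${\cal C}(n)$ is exactly ${\cal C}(p^{e_p(n)})$, and distinct exponents give distinct groups in this chain, so the group ${\cal C}(n)$ determines every exponent $e_p(n)$ and hence the supernatural number $n$. The usual conventions for the value $\infty$ cause no difficulty, since ${\cal C}(p^\infty)$ is strictly larger than every ${\cal C}(p^k)$ with $k$ finite.

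Next I would verify that $\Phi$ is an order isomorphism, i.e. that $m\prec n$ (divisibility in ${\mathbb N}_S$) holds if and only if ${\cal C}(m)$ is a subgroup of ${\cal C}(n)$. Working prime by prime on the factorizations, this reduces to the fact that ${\cal C}(p^a)\prec{\cal C}(p^b)$ as subgroups precisely when $a\le b$, which is exactly the divisibility condition $e_p(m)\le e_p(n)$ for all $p$. Since a bijection that preserves and reflects the order between two lattices automatically carries meets to meets and joins to joins, this simultaneously reproves the formulas in Eq.(\ref{T5}) and shows that $\Phi$ is a lattice isomorphism; in particular it identifies the bounds ${\cal O}={\mathbb Z}(1)$ and ${\cal I}={\cal C}(\Omega)$ with $1$ and $\Omega$.

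Finally, transporting the structure across $\Phi$ finishes the argument. An order isomorphism preserves all existing infima and suprema, so completeness of $\Lambda({\mathbb N}_S)$ gives completeness of $\Lambda({\mathfrak Z}_S)$; the relative pseudocomplement $(a\Rightarrow b)$, being characterized order-theoretically as the greatest $x$ with $a\wedge x\prec b$, is likewise transported, so $\Lambda({\mathfrak Z}_S)$ is a Heyting algebra; and the Stone identity $\neg a\vee\neg\neg a={\cal I}$ transports verbatim. I expect no serious obstacle: the transport step is formal, and the only point requiring care is the bookkeeping in the injectivity and order-reflection claims for supernatural $n$ (infinitely many primes and possibly infinite exponents), which the prime-by-prime reduction handles cleanly.
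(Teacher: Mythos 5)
Your proposal is correct and follows essentially the same route as the paper, which simply asserts that $\Lambda({\mathfrak Z}_S)$ is isomorphic to $\Lambda({\mathbb N}_S)$ and lets the proposition follow from the already-established properties of $\Lambda({\mathbb N}_S)$. The only difference is that you spell out the verification of the isomorphism (bijectivity, order preservation and reflection, transport of completeness, implication, and the Stone identity), which the paper leaves implicit.
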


\paragraph*{The nonvalidity of the law of the excluded middle in $\Lambda \left ({{\mathfrak Z}_S}\right )$ :}
$\neg {\cal C}(n)$ is the maximal group in ${\mathfrak Z}_S$ such that $\neg {\cal C}(n) \wedge {\cal C}(n)={\cal C}(1)$. 
If $n$ is a $\pi$-number, then 
\begin{eqnarray}\label{D1}
\neg {\cal C}(n)=\bigoplus _{p\in \Pi-\pi}({\mathbb Q}_p/{\mathbb Z}_p)
;\;\;\;\;\;
\neg\neg {\cal C}(n)=\bigoplus_{p\in \pi}({\mathbb Q}_p/{\mathbb Z}_p)
\end{eqnarray}
 Therefore
${\cal C}(n) \prec \neg\neg {\cal C}(n)$ and consequently
the law of the excluded middle is not valid.
Here the `middle' are the ${\cal C}(n)$, where $n$ is a non-maximal $\pi$-number.
If we delete them from the set ${\mathfrak Z}_S$, we get the set
\begin{eqnarray}\label{129}
{\mathfrak Z}_S^B=\{{\cal C}[\Omega (\pi)]\;|\;\pi \subseteq \Pi\};\;\;\;\;\;{\cal C}[\Omega (\pi)]
=\bigoplus _{p\in \pi}({\mathbb Q}_p/{\mathbb Z}_p),
\end{eqnarray}
which is a Boolean algebra.

We note that $[\neg {\cal C}(n)]\vee [\neg\neg {\cal C}(n)]={\cal I}$,
as it should be  in a Stone lattice.

\paragraph*{The implication in $\Lambda ({\mathfrak Z}_S)$:}

Taking into account Eqs.(\ref{100}),(\ref{100A}), we show that
\begin{eqnarray}\label{100B}
[{\cal C}(m)\Rightarrow {\cal C}(n)]={\cal C}(m\Rightarrow n)=\left [\bigoplus _{p\in \varpi (n\ge m)}({\mathbb Q}_p/{\mathbb Z}_p)\right ]\times
{\cal C}\left (\prod _{p\in \varpi (m>n)}p^{e_p(n)}\right),
\end{eqnarray}
and that
\begin{eqnarray}\label{101B}
&&(\neg {\cal C}(m))\vee {\cal C}(n)={\cal C}[(\neg m)\vee n]=
\left [\bigoplus _{p\in \pi}({\mathbb Q}_p/{\mathbb Z}_p)\right ]\times {\cal C}\left (\prod _{p\in \varpi _f(n)\cap \varpi (m)}
p^{e_p(n)}\right )\nonumber\\
&&\pi=[\Pi-\varpi (m)]\cup \varpi _i(n).
\end{eqnarray}
Therefore
\begin{eqnarray}\label{M2}
(\neg {\cal C}(m))\vee {\cal C}(n)\prec [{\cal C}(m)\Rightarrow {\cal C}(n)].
\end{eqnarray}

\subsubsection{The finite Heyting algebra $\Lambda [{\mathfrak Z}(n)]$}
We also consider the set 
\begin{eqnarray}
{\mathfrak Z}(n)=\{{\mathbb Z}(m)\;|\;m\in {\mathbb D}(n)\};\;\;\;\;\;n\in{\mathbb N},
\end{eqnarray}
which contains the subgroups of ${\mathbb Z}(n)$.
It is a complete Heyting algebra with 
\begin{eqnarray}
{\cal O}={\mathbb Z}(1);\;\;\;\;\;{\cal I}={\mathbb Z}(n). 
\end{eqnarray}
The logical connectives can be defined in analogous way to Eq.(\ref{ZQN}).
Its subset
\begin{eqnarray}
\Lambda [{\mathfrak Z}^B(n)]=\left \{{\mathbb Z}\left (\prod _{p\in \pi}p^{e_p(n)}\right )\;|\;\pi \subseteq \varpi(n)\right \},
\end{eqnarray}
contains the $\pi$-Hall subgroups of ${\mathbb Z}(n)$, and it 
is a Boolean algebra.

\subsection{The complete Heyting algebra $\Lambda (\widetilde {{\mathfrak Z}_S})$}

We have defined binary operations in ${\mathfrak Z}_S$, and through Pontryagin duality we can define the corresponding operations
in $\widetilde {{\mathfrak Z}_S}$:
\begin{eqnarray}
&&{\widetilde {\cal C}}(n)\wedge {\widetilde {\cal C}}(m)=\widetilde {{\cal C}(n)\wedge {\cal C}(m)}={\widetilde {\cal C}}(n\wedge m)\nonumber\\
&&{\widetilde {\cal C}}(n)\vee {\widetilde {\cal C}}(m)=\widetilde {{\cal C}(n)\vee {\cal C}(m)}={\widetilde {\cal C}}(n\vee m)
\end{eqnarray}
The partial order in $\Lambda (\widetilde {{\mathfrak Z}_S})$ is different from the one in $\Lambda ({{\mathfrak Z}_S})$.
The partial order in $\Lambda ({{\mathfrak Z}_S})$ is `subgroup', and this implies that there is a
quotient relation between their Pontryagin dual groups in $\Lambda (\widetilde {{\mathfrak Z}_S})$, which involves the annihilators of 
the groups:
\begin{eqnarray}
{\mathbb Z}(p^e)\cong \widehat {\mathbb Z}/{\rm Ann}_{\widehat {\mathbb Z}}{\mathbb Z}(p^e);\;\;\;\;\;
{\mathbb Z}_p\cong \widehat {\mathbb Z}/{\rm Ann}_{\widehat {\mathbb Z}}({\mathbb Q}_p/{\mathbb Z}_p)
\end{eqnarray}
Details in a general context are discussed in \cite{PO}, and in the present context in \cite{VOUREV}.  

$\Lambda (\widetilde {{\mathfrak Z}_S})$  is isomorphic to $\Lambda ({\mathfrak Z}_S)$
and therefore it is a complete Heyting algebra (and a Stone lattice), with
\begin{eqnarray}\label{300}
{\cal O}={\mathbb Z}(1);\;\;\;\;\;{\cal I}=\widehat {\mathbb Z}. 
\end{eqnarray}
Its elements are $\pi$-groups and $p$-groups (see Eq.(\ref{A19})) which in general are not subgroups of $\widehat {\mathbb Z}$,
although in some cases (as in proposition \ref{bb} below) they are subgroups of $\widehat {\mathbb Z}$.

\paragraph*{The implication in $\Lambda (\widetilde {{\mathfrak Z}_S})$:}
We show that
\begin{eqnarray}\label{100BB}
[{\widetilde {\cal C}}(m)\Rightarrow {\widetilde {\cal C}}(n)]=[\widetilde {{\cal C}(m)\Rightarrow {\cal C}(n)}]={\widetilde {\cal C}}(m\Rightarrow n)
=\left [\prod _{p\in \varpi (n\ge m)}{\mathbb Z}_p\right ]
{\widetilde {\cal C}}\left (\prod _{p\in \varpi (m>n)}p^{e_p(n)}\right),
\end{eqnarray}
and also that
\begin{eqnarray}\label{303}
&&(\neg {\widetilde {\cal C}}(m))\vee {\widetilde {\cal C}}(n)={\widetilde {\cal C}}[(\neg m)\vee n]=
\left [\prod _{p\in \pi}{\mathbb Z}_p\right ]\times {\cal C}\left (\prod _{p\in \varpi _f(n)\cap \varpi (m)}
p^{e_p(n)}\right )\nonumber\\
&&\pi=[\Pi-\varpi (m)]\cup \varpi _i(n).
\end{eqnarray}
Since this is a Heyting algebra, these two quantities are different.

\paragraph*{Negation:}
If $n$ is a $\pi$-number, then 
\begin{eqnarray}\label{D2}
\neg {\widetilde {\cal C}}(n)=\prod _{p\in \Pi-\pi}{\mathbb Z}_p
;\;\;\;\;\;
\neg\neg {\widetilde {\cal C}}(n)=\prod _{p\in \pi}{\mathbb Z}_p.
\end{eqnarray}
We express this in the following proposition:
\begin{proposition}\label{bb}
Let $G$ be a $\pi$-group which is an element of $\Lambda (\widetilde {{\mathfrak Z}_S})$.
\begin{itemize}
\item[(1)]
$\neg G=\prod _{p\in \Pi-\pi}{\mathbb Z}_p$ is a ($\Pi-\pi$)-Hall subgroup of the profinite group ${\widehat {\mathbb Z}}$.
\item[(2)]
$\neg \neg G =\prod _{p\in \pi}{\mathbb Z}_p$ is a $\pi$-Hall subgroup of the profinite group ${\widehat {\mathbb Z}}$.
\end{itemize}
\end{proposition}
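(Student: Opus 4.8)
The plan is to read off $\neg G$ and $\neg\neg G$ from the lattice isomorphism, observe that they are honest subgroups of $\widehat{\mathbb Z}$, and then check the defining conditions of a Hall subgroup from the Sylow-theory preliminaries. First I would fix the representation of $G$: since it is a $\pi$-group in $\Lambda(\widetilde{\mathfrak Z}_S)$, it equals $\widetilde{\cal C}(n)$ for a supernatural number $n$ with $\varpi(n)=\pi$. Because $\Lambda(\widetilde{\mathfrak Z}_S)$ is isomorphic to $\Lambda({\mathfrak Z}_S)$ and hence to $\Lambda({\mathbb N}_S)$, the pseudocomplement is transported from $\Lambda({\mathbb N}_S)$, where Eq.(\ref{100}) gives $\neg n=\Omega(\Pi-\pi)$ and $\neg\neg n=\Omega(\pi)$. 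The dictionary $\widetilde{\cal C}[\Omega(\sigma)]\cong\prod_{p\in\sigma}{\mathbb Z}_p$ of Eq.(\ref{A19}) then produces exactly the expressions of Eq.(\ref{D2}), namely $\neg G=\prod_{p\in\Pi-\pi}{\mathbb Z}_p$ and $\neg\neg G=\prod_{p\in\pi}{\mathbb Z}_p$. So the forms are immediate, and the real content of the proposition is the status of these groups inside $\widehat{\mathbb Z}$.

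The decisive step is to note that, although a generic element of $\Lambda(\widetilde{\mathfrak Z}_S)$ is related to $\widehat{\mathbb Z}$ only through the annihilator/quotient relation and need not be a subgroup of it, the two groups above are literal sub-products of $\widehat{\mathbb Z}=\prod_{p\in\Pi}{\mathbb Z}_p$: each is the direct factor supported on a subset of the index set $\Pi$, equivalently the kernel of the projection onto the complementary factors, and is therefore a genuine subgroup. This is special to pseudocomplements, since negation always lands on a Hall divisor $\Omega(\sigma)$ whose dual is an honest direct factor of $\widehat{\mathbb Z}$.

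Finally I would verify the Hall conditions. For $\neg\neg G=\prod_{p\in\pi}{\mathbb Z}_p$, the order is the supernatural number $\Omega(\pi)$, whose prime divisors all lie in $\pi$, so it is a $\pi$-group; its index in $\widehat{\mathbb Z}$ is $\prod_{p\in\Pi-\pi}{\mathbb Z}_p$ of order $\Omega(\Pi-\pi)$; the order and index involve disjoint prime sets and are therefore coprime; and maximality among $\pi$-subgroups is exactly the fact already recorded in the Sylow-theory preliminaries, that $\prod_{p\in\pi}{\mathbb Z}_p$ is a $\pi$-Hall subgroup of $\widehat{\mathbb Z}$. This gives part (2), and part (1) follows by the identical argument with $\pi$ replaced by $\Pi-\pi$. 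The one point that genuinely needs care is the subgroup claim of the middle paragraph, namely distinguishing the honest direct-factor embedding from the generic quotient relation; once that is settled the Hall axioms drop out from the disjointness of the two prime sets.
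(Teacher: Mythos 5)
Your proposal is correct and takes essentially the same route as the paper: the explicit forms of $\neg G$ and $\neg\neg G$ are obtained by transporting the pseudocomplement from $\Lambda({\mathbb N}_S)$ through the lattice isomorphisms (this is exactly Eq.(\ref{D2})), and the Hall property is the statement already recorded in the Sylow-theory preliminaries, namely that $\prod_{p\in\pi}{\mathbb Z}_p$ is a $\pi$-Hall subgroup of $\widehat{\mathbb Z}$. Your middle paragraph on the honest direct-factor embedding simply makes explicit the paper's own remark that elements of $\Lambda(\widetilde{{\mathfrak Z}_S})$ are in general not subgroups of $\widehat{\mathbb Z}$, but are so in the case covered by this proposition.
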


\paragraph*{The Boolean algebra $\Lambda \left (\widetilde {{\mathfrak Z}_S^B}\right )$:}
The subset of $\widetilde{\mathfrak Z}_S$ given by
\begin{eqnarray}\label{301}
\widetilde{{\mathfrak Z}_S^B}=\{\prod _{p\in \pi}{\mathbb Z}_p\;|\;\pi \subset \Pi\},
\end{eqnarray}
with $\wedge$ and $\vee$ operations,
is the Boolean algebra $\Lambda \left (\widetilde {{\mathfrak Z}_S^B}\right )$,
and it contains $\pi$-Hall subgroups of ${\widehat {\mathbb Z}}$
(and $p$-Sylow subgroups of ${\widehat {\mathbb Z}}$).

\paragraph*{The finite Heyting algebra $\Lambda (\widetilde {{\mathfrak Z}(n))}$:}
Since the ${\mathbb Z}(n)$ with $n\in {\mathbb N}$ are Pontryagin self-dual groups, the
$\Lambda (\widetilde {{\mathfrak Z}(n))}$ is identical to $\Lambda [{{\mathfrak Z}(n)]}$.

\section{The Heyting algebra $\Lambda[{\bf \Sigma}(n)]$ of subsystems}

\subsection{Subsystems}
$\Sigma [{\mathbb Z}(n), {\mathbb Z}(n)]$ is a quantum system with positions and momenta in ${\mathbb Z}(n)$.
The Hilbert space $H(n)$ for this system is $n$-dimensional, and 
$|X_n;r\rangle$ where $r\in {\mathbb Z}(n)$, is an orthonormal basis that we call `basis of position states'
(the $X_n$ in this notation is not a variable, but it simply indicates that they are position states).
Through a Fourier transform we get another orthonormal basis that we call momentum states:
\begin{equation}
|{P_n};r\rangle=F_n|{X_n};r\rangle;\;\;\;\;
F_n=n^{-1/2}\sum _{r,s}\omega _n(rs)\ket{X_n;r}\bra{X_n;s}.
\end{equation}
\begin{remark}
\mbox{}
\begin{itemize}
\item[(1)]
The system $\Sigma [{\mathbb Z}(1),{\mathbb Z}(1)]$ is 
physically trivial, as it has one-dimensional Hilbert space $H(1)$ which consists of the `vacuum' state $\ket{X_1;0}=\ket{P_1;0}$.
\item[(2)]
A system $\Sigma (E, \widetilde E)$ is a subsystem of $\Sigma (G, \widetilde G)$ if $E$ is a subgroup of G (in which case the
$\widetilde E$ is related to $\widetilde G$ through a quotient relation that involves their annihilators).
For $m|n$ the $\Sigma [{\mathbb Z}(m),{\mathbb Z}(m)]$ is a subsystem of $\Sigma [{\mathbb Z}(n),{\mathbb Z}(n)]$
(which we denote as $\Sigma [{\mathbb Z}(m),{\mathbb Z}(m)]\prec \Sigma [{\mathbb Z}(n),{\mathbb Z}(n)]$), and
the space $H(m)$ is a subspace of $H(n)$ (which we denote as $H(m)\prec H(n)$).
\end{itemize}
\end{remark}
\begin{definition}[inspired by the Sylow theory for groups]
\mbox{}
\begin{itemize}
\item[(1)]
$\Sigma (G, \widetilde G)$ is a $\pi$-system if $\widetilde G$ is a $\pi$-group.
If $\pi$ contains only one prime $p$, the $\Sigma (G, \widetilde G)$ is a $p$-system.
\item[(2)]
$\Sigma (E, \widetilde E)$ is a $\pi$-Hall subsystem of $\Sigma (G, \widetilde G)$,
if $\widetilde E$ is a $\pi$-Hall subgroup of $\widetilde G$.
In this case $\Sigma (E, \widetilde E)$ is maximal in the sense that there is no $\pi$-subsystem of $\Sigma (G, \widetilde G)$
which has $\Sigma (E, \widetilde E)$ as a proper subsystem.
If $\pi$ contains only one prime $p$, the $\Sigma (E, \widetilde E)$ is a $p$-Sylow subsystem of 
$\Sigma (E, \widetilde E)$.
\end{itemize}
\end{definition}
Let ${\bf \Sigma}(n)$ be the set of subsystems of $\Sigma [{\mathbb Z}(n),{\mathbb Z}(n)]$ and ${\bf H}(n)$ the set of their Hilbert spaces:
\begin{eqnarray}
&&{\bf \Sigma} (n)=\{\Sigma [{\mathbb Z}(m),{\mathbb Z}(m)]\;|\;m\in {\mathbb D}(n)\}\nonumber\\
&&{\bf H}(n) =\{H(m)\;|\;m\in {\mathbb D}(n)\}.
\end{eqnarray}
${\bf \Sigma}(n)$ is a partially ordered set with partial order `subsystem'.
${\bf H}(n)$ is a partially ordered set with partial order `subspace'.
The set ${\bf H}(n)$ does not contain all the subspaces of a Hilbert space $H(n)$,
but only the ones that correspond to subsystems of $\Sigma [{\mathbb Z}(n),{\mathbb Z}(n)]$.
The concept `Hilbert space of a subsystem of $\Sigma [{\mathbb Z}(n),{\mathbb Z}(n)]$', is stronger than the concept of subspace of $H(n)$. 
Below we study these sets as Heyting algebras.

\paragraph*{Embedings and their compatibility:}
In ref\cite{VOU1}, we have studied ${\bf \Sigma}(n) -\{\Sigma [{\mathbb Z}(1),{\mathbb Z}(1)]\}$ as a partially ordered set with the partial order `subsystem' 
(it is not a lattice).
We have shown that for $m|k$, the $\Sigma [{\mathbb Z}(m),{\mathbb Z}(m)]$ is a subsystem of 
$\Sigma [{\mathbb Z}(k),{\mathbb Z}(k)]$ and that 
there are embeddings of various attributes related to $\Sigma [{\mathbb Z}(m),{\mathbb Z}(m)]$
(quantum states, density matrices, measurements, operators, etc)
into their counterparts in $\Sigma [{\mathbb Z}(k),{\mathbb Z}(k)]$, which are compatible with each other and preserve the structure.
For example, all states in $H(m)$ are mapped into states in $H(k)$ with the linear map
\begin{eqnarray}\label{56}
{\cal A}_{mk}:\;\;\sum _{r=0}^{m-1}a_r\ket{X_m;r}\;\;\rightarrow\;\;\sum _{r=0}^{m-1}a_r\ket{X_k;dr};\;\;\;\;d=\frac{k}{m};\;\;\;\;m|k.
\end{eqnarray}
The same map can be written in terms of momentum states as
\begin{eqnarray}\label{56a}
&&{\cal A}_{mk}:\;\;\sum _{r=0}^{m-1}b_r\ket{P_m;r}\;\;\rightarrow\;\;\sum _{s=0}^{k-1}c_s\ket{P_k;s};\;\;\;\;m|k.\nonumber\\
&&s=r\;({\rm mod}\;m)\;\rightarrow\;c_s=d^{-1/2}b_r;\;\;\;\;d=\frac{k}{m}
\end{eqnarray}
We call ${\cal A}_{mk}[H(m)]$ the image of the function ${\cal A}_{mk}$.
The space ${\cal A}_{mk}[H(m)]$ is a subspace of $H(k)$ which is isomorphic to $H(m)$.
We have shown that these maps are compatible in the sense that 
\begin{eqnarray}\label{comp}
m|s|k\;\;\rightarrow\;\;{\cal A}_{sk}\circ {\cal A}_{ms}={\cal A}_{mk}.
\end{eqnarray}
The map ${\cal A}_{mk}$ induces the following map for density matrices 
\begin{eqnarray}\label{56b}
{\cal A}'_{mk}:\;\;\rho _m=\sum _{r,s=0}^{m-1}a_{rs}\ket{X_m;r}\bra{X_m;s}\;\;\rightarrow\;\;\rho _k=\sum _{r,s=0}^{m-1}a_{rs}\ket{X_k;dr}\bra{X_k;ds};\;\;\;\;d=\frac{k}{m};\;\;\;\;m|k.
\end{eqnarray}

In this paper we include the system $\Sigma [{\mathbb Z}(1),{\mathbb Z}(1)]$ into ${\bf \Sigma}(n)$ so that it becomes a lattice.
\begin{definition}
$h(n)$ is the following $\varphi(n)$-dimensional subspace of $H(n)$:
\begin{eqnarray}\label{2}
h(n)&=&{\rm span}\{\ket{X_n;s}\;|\;s\in {\mathbb Z}^*(n)\}.
\end{eqnarray}
\end{definition}
\begin{lemma}\label{790}
For $m|n$, let ${\cal A}_{mn}[h(m)]$ be the embedding of the space $h(m)$ into $H(n)$, given in Eq.(\ref{56}). 
If $m,k\in {\mathbb D}(n)$, then 
the spaces ${\cal A}_{mn}[h(m)]$ and ${\cal A}_{kn}[h(k)]$ are orthogonal to each other, and
\begin{eqnarray}\label{606}
H(n)=\bigoplus _{m|n}{\cal A}_{mn}[h(m)].
\end{eqnarray}
\end{lemma}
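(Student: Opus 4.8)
The plan is to reduce the statement to elementary combinatorics of the position basis $\{\ket{X_n;s}\}_{s\in{\mathbb Z}(n)}$ of $H(n)$. First I would write the embedded space explicitly: applying the map ${\cal A}_{mn}$ of Eq.(\ref{56}) to the spanning set of $h(m)$ from Eq.(\ref{2}) gives
\[
{\cal A}_{mn}[h(m)]={\rm span}\{\ket{X_n;dr}\;|\;r\in{\mathbb Z}^*(m)\},\qquad d=\frac{n}{m}.
\]
Thus each ${\cal A}_{mn}[h(m)]$ is spanned by a subset of the orthonormal position basis of $H(n)$, namely those $\ket{X_n;s}$ with $s=(n/m)r$ and $\gcd(r,m)=1$.

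The crux is a number-theoretic identification of these labels. I claim that $s=(n/m)r$ with $r\in{\mathbb Z}^*(m)$ runs exactly over the $s\in{\mathbb Z}(n)$ with $\gcd(s,n)=n/m$. Indeed, writing $d=n/m$ one has $\gcd(dr,n)=\gcd(dr,dm)=d\,\gcd(r,m)=d$; conversely, any $s$ with $\gcd(s,n)=g$ factors uniquely as $s=g\,(s/g)$ with $\gcd(s/g,n/g)=1$, and setting $m=n/g$ recovers the above form with $r=s/g\in{\mathbb Z}^*(m)$. Hence $s\mapsto(m,r)$ with $m=n/\gcd(s,n)$ is a bijection from ${\mathbb Z}(n)$ onto the disjoint union $\bigsqcup_{m|n}{\mathbb Z}^*(m)$, and the divisor $m$ attached to each position $s$ is uniquely determined.

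Orthogonality is then immediate: for $m\neq k$ the labels belong to different values of $\gcd(\cdot,n)$, so $\{(n/m)r\}$ and $\{(n/k)r'\}$ are disjoint subsets of the basis, and subspaces spanned by disjoint subsets of an orthonormal basis are orthogonal. The decomposition (\ref{606}) follows because, by the bijection, the union of these label-sets over all $m\in{\mathbb D}(n)$ is all of ${\mathbb Z}(n)$; the orthogonal subspaces therefore together span $H(n)$. As a dimension check, the classical identity $\sum_{m|n}\varphi(m)=n=\dim H(n)$ confirms that the $\varphi(m)$-dimensional pieces account for every basis vector exactly once.

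I expect the only delicate point to be the bookkeeping at the extreme divisors and the trivial system. The term $m=1$ contributes ${\cal A}_{1n}[h(1)]={\rm span}\{\ket{X_n;0}\}$, matching the single $s$ with $\gcd(s,n)=n$ once one adopts the convention $\varphi(1)=1$ for the vacuum $\ket{X_1;0}$; the term $m=n$ contributes $h(n)$ itself, the span of the $\ket{X_n;s}$ with $s$ coprime to $n$. No genuine analytic difficulty arises: the whole lemma rests on the compatibility of the $\gcd$-stratification of ${\mathbb Z}(n)$ with the position-scaling form of the embedding ${\cal A}_{mn}$.
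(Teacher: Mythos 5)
Your proof is correct and follows essentially the same route as the paper's: both reduce the lemma to the position basis of $H(n)$, observe that each ${\cal A}_{mn}[h(m)]$ is spanned by the basis vectors $\ket{X_n;s}$ with ${\rm GCD}(s,n)=n/m$, and conclude disjointness (hence orthogonality) and exhaustion of the labels, with $\sum_{m|n}\varphi(m)=n$ as the dimension check. The only difference is presentational: you make the gcd-stratification bijection explicit up front, whereas the paper argues disjointness by showing the label equations $nr/m=ns/k$ with the coprimality conditions are incompatible, and then uses the same ${\rm GCD}(r,n)=a$, $m=n/a$ argument for spanning.
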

\begin{proof}
We first point out that if $m,k$ are two different divisors of $n$, then 
\begin{eqnarray}
{\cal A}_{mn}[h(m)]\cap {\cal A}_{kn}[h(k)]={\bf O}.
\end{eqnarray}
${\bf O}$
is the zero-dimensional space 
which contains only the zero vector. Indeed, if ${\cal A}_{mn}[\ket{X_m;r}]={\cal A}_{kn}[\ket{X_k;s}]$
(where $r\in {\mathbb Z}^*(m)$ and $s \in {\mathbb Z}^*(k)$), then
\begin{eqnarray}
\frac{nr}{m}=\frac{ns}{k};\;\;\;\;\;{\rm GCD}(r,m)={\rm GCD}(s,k)=1,
\end{eqnarray}
which is an incompatible set of equations.

We next consider the position state $\ket{X_n;r}$ in $H(n)$, and let ${\rm GCD}(r,n)=a$.
Then
\begin{eqnarray}
\ket{X_n;r}={\cal A}_{mn}[\ket{X_m;r'}];\;\;\;\;\;m=\frac{n}{a};\;\;\;\;r'=\frac{r}{a};\;\;\;\;{\rm GCD}(r',m)=1,
\end{eqnarray}
where $\ket{X_m;r'}$ is a state in $h(m)$. Consequently, an arbitrary state $\sum a_r \ket{X_n;r}$ in $H(n)$ can be written as
sum of states in various ${\cal A}_{mn}[h(m)]$. This completes the proof. We note that the dimension of $h(n)$ is $\varphi(n)$, and the known relation
\begin{eqnarray}
\sum _{m|n}\varphi (m)=n
\end{eqnarray}
is consistent with Eq.(\ref{606}).
\end{proof}
For later use we define the space ${\widetilde H}(n)$ by excluding the lowest state from $H(n)$:
\begin{eqnarray}\label{cde}
{\widetilde H}(n)=\bigoplus _{m}{\cal A}_{mn}[h(m)];\;\;\;\;\;m\in {\mathbb D}(n)-\{1\}.
\end{eqnarray}

\subsection{The Heyting algebra of subsystems $\Lambda[{\bf \Sigma}(n)]$ and the physical meaning of the logical operations}

The set ${\bf \Sigma}(n) $  with 
\begin{eqnarray}\label{123}
&&{\Sigma}[{\mathbb Z}(m),{\mathbb Z}(m)]\wedge {\Sigma}[{\mathbb Z}(k),{\mathbb Z}(k)]=
{\Sigma}[{\mathbb Z}(m\wedge k),{\mathbb Z}(m \wedge k)]\nonumber\\
&&\Sigma [{\mathbb Z}(m),{\mathbb Z}(m)]\vee \Sigma [{\mathbb Z}(k),{\mathbb Z}(k)]=\Sigma [{\mathbb Z}(m\vee k),{\mathbb Z}(m\vee k)]
\end{eqnarray}
where $m,k\in {\mathbb D}(n)$, is a finite distributive lattice which we denote as $\Lambda [{\bf \Sigma}(n)] $. Therefore it is a Heyting algebra with
\begin{eqnarray}
{\cal O}=\Sigma [{\mathbb Z}(1),{\mathbb Z}(1)];\;\;\;\;\;{\cal I}=\Sigma [{\mathbb Z}(n),{\mathbb Z}(n)], 
\end{eqnarray}
and it is
isomorphic to $\Lambda ({\mathbb D}(n))$.
In analogous way we define the logical operations in ${\bf H}(n)$, which 
is a Heyting algebra isomorphic to 
$\Lambda [{\mathbb D}(n)]$ and $\Lambda [{\bf \Sigma}(n)]$, and we denote it as $\Lambda [{\bf H}(n)]$.

\paragraph*{Physical meaning of the meet:}
The ${\Sigma}[{\mathbb Z}(m),{\mathbb Z}(m)]\wedge {\Sigma}[{\mathbb Z}(k),{\mathbb Z}(k)]$ 
(where $m,k\in {\mathbb D}(n)$) is the largest system in the set ${\bf \Sigma}(n)$ 
which is a subsystem of both 
${\Sigma}[{\mathbb Z}(m),{\mathbb Z}(m)]$ and ${\Sigma}[{\mathbb Z}(k),{\mathbb Z}(k)]$.  
It quantifies the commonality between the systems ${\Sigma}[{\mathbb Z}(m),{\mathbb Z}(m)]$ and ${\Sigma}[{\mathbb Z}(k),{\mathbb Z}(k)]$.
Its Hilbert space $H(m\wedge k)$ is the intersection $H(m)\cap H(k)$.

Using terminology inspired by number theory, we say that 
${\Sigma}[{\mathbb Z}(m),{\mathbb Z}(m)]$ and ${\Sigma}[{\mathbb Z}(k),{\mathbb Z}(k)]$
are coprime systems, when ${\Sigma}[{\mathbb Z}(m),{\mathbb Z}(m)]\wedge {\Sigma}[{\mathbb Z}(k),{\mathbb Z}(k)]={\Sigma}[{\mathbb Z}(1),{\mathbb Z}(1)]$
(i.e., when $m,k$ are coprime numbers). Physically these systems contain complementary information, in the sense that they share only the `lowest state'
$\ket{X_1;0}$.

\paragraph*{Physical meaning of the join:}
The $\Sigma [{\mathbb Z}(m),{\mathbb Z}(m)]\vee \Sigma [{\mathbb Z}(k),{\mathbb Z}(k)]$ (where $m,k\in {\mathbb D}(n)$) 
is the smallest system in the set 
${\bf \Sigma}(n)$ which contains both
${\Sigma}[{\mathbb Z}(m),{\mathbb Z}(m)]$ and ${\Sigma}[{\mathbb Z}(k),{\mathbb Z}(k)]$ as subsystems. 
Its Hilbert space $H(m\vee k)$ is discussed in detail below.
We will see that $H(m\vee k)$ is a larger space than ${\rm span}[H(m)\cup H(k)]$, and this has important implications.

The logical operations 
\begin{eqnarray}\label{123}
&&\neg {\Sigma}[{\mathbb Z}(m),{\mathbb Z}(m)]={\Sigma}[{\mathbb Z}(\neg m),{\mathbb Z}(\neg m)]\nonumber\\
&&\left ({\Sigma}[{\mathbb Z}(m),{\mathbb Z}(m)]\Rightarrow {\Sigma}[{\mathbb Z}(k),{\mathbb Z}(k)]\right )=
{\Sigma}[{\mathbb Z}(m\Rightarrow k),{\mathbb Z}(m \Rightarrow k)]\nonumber\\
&&\left ({\Sigma}[{\mathbb Z}(m),{\mathbb Z}(m)]\Leftrightarrow {\Sigma}[{\mathbb Z}(k),{\mathbb Z}(k)]\right )=
{\Sigma}[{\mathbb Z}(m\Leftrightarrow k),{\mathbb Z}(m \Leftrightarrow k)]
\end{eqnarray}
can be calculated using Eq.(\ref{ZQN}).

\paragraph*{Physical meaning of the negation:}
$\neg \Sigma[{\mathbb Z}(m),{\mathbb Z}(m)]$ is the maximal system in ${\bf \Sigma}(n)$ which is coprime to $\Sigma[{\mathbb Z}(m),{\mathbb Z}(m)]$.
If $\Sigma[{\mathbb Z}(m),{\mathbb Z}(m)]$ is a $\pi$-system (where $\pi\subset \varpi (n)$) then 
$\neg \Sigma[{\mathbb Z}(m),{\mathbb Z}(m)]$ is a $(\varpi (n)-\pi)$-Hall subsystem of $\Sigma [{\mathbb Z}(n),{\mathbb Z}(n)]$, i.e., it
is the maximal $(\varpi (n)-\pi)$-system in ${\bf \Sigma}(n)$ which contains
complementary information to $\Sigma [{\mathbb Z}(m),{\mathbb Z}(m)]$.
Also
$\neg \neg \Sigma[{\mathbb Z}(m),{\mathbb Z}(m)]$ is a $\pi$-Hall subsystem of $\Sigma [{\mathbb Z}(n),{\mathbb Z}(n)]$, and consequently
$\Sigma[{\mathbb Z}(m),{\mathbb Z}(m)]$ is a subsystem of $\neg \neg \Sigma[{\mathbb Z}(m),{\mathbb Z}(m)]$
and the law of the excluded middle is not valid.
The `middle' here is $\pi$-subsystems which are not $\pi$-Hall subsystems of $\Sigma [{\mathbb Z}(n),{\mathbb Z}(n)]$.
If we delete them from the set ${\bf \Sigma}(n)$ we get 
its subset ${\bf \Sigma}^B(n)$ which is a Boolean algebra (it contains the systems ${\Sigma}[{\mathbb Z}(m),{\mathbb Z}(m)]$
with $m\in {\mathbb D}^B(n)$).

We note that $\neg \Sigma[{\mathbb Z}(m),{\mathbb Z}(m)]\vee \neg \neg \Sigma[{\mathbb Z}(m),{\mathbb Z}(m)]={\cal I}$,
i.e, $\Lambda[{\bf \Sigma}(n)]$ is a Stone lattice.

\paragraph*{Physical meaning of the implication:}
The implication $[\Sigma [{\mathbb Z}(m),{\mathbb Z}(m)]\Rightarrow \Sigma [{\mathbb Z}(k),{\mathbb Z}(k)]]$ 
is the maximal system in ${\bf \Sigma}(n)$ which is `partly coprime' to $\Sigma [{\mathbb Z}(m),{\mathbb Z}(m)]]$
in the sense that
\begin{eqnarray}
\Sigma [{\mathbb Z}(m),{\mathbb Z}(m)]\wedge [\Sigma [{\mathbb Z}(m),{\mathbb Z}(m)]\Rightarrow \Sigma [{\mathbb Z}(k),{\mathbb Z}(k)]]
\prec \Sigma [{\mathbb Z}(k),{\mathbb Z}(k)]
\end{eqnarray} 
Therefore $[\Sigma [{\mathbb Z}(m),{\mathbb Z}(m)]\Rightarrow \Sigma [{\mathbb Z}(k),{\mathbb Z}(k)]]$
contains `mainly' complementary information to $\Sigma [{\mathbb Z}(m),{\mathbb Z}(m)]$, but there is `some' overlap with it, which is
bounded by the information in $\Sigma [{\mathbb Z}(k),{\mathbb Z}(k)]$.
In analogy with Eqs.(\ref{M1}),(\ref{M2}), we show that
\begin{eqnarray}
(\neg \Sigma [{\mathbb Z}(m),{\mathbb Z}(m)])\vee \Sigma[{\mathbb Z}(k),{\mathbb Z}(k)]\prec 
\left (\Sigma [{\mathbb Z}(m),{\mathbb Z}(m)]\Rightarrow \Sigma [{\mathbb Z}(k),{\mathbb Z}(k)]\right ).
\end{eqnarray}
In a Boolean algebra these two quantities would have been equal to each other.

\paragraph*{Physical meaning of the equivalence:}
$[\Sigma [{\mathbb Z}(m),{\mathbb Z}(m)]\Leftrightarrow \Sigma [{\mathbb Z}(k),{\mathbb Z}(k)]]$
is the maximal system in ${\bf \Sigma}(n)$ which is `partly coprime' to both $\Sigma [{\mathbb Z}(m),{\mathbb Z}(m)]]$ and 
$ \Sigma [{\mathbb Z}(k),{\mathbb Z}(k)]$.

\subsection{Link of the logical operations to commuting von Neumann projectors}

We define the projector
\begin{eqnarray}\label{oper}
{\mathfrak P}(m)=\sum _{r=0}^{m-1}\ket {X_{k};rd}\bra{X_{k};rd};\;\;\;\;\;d=\frac{k}{m};\;\;\;\;m|k.
\end{eqnarray}
It might appear that we need to use an index $k$ and denote this projector ${\mathfrak P}_k(m)$, but
the isomorphism of Eq.(\ref{56}), which identifies the state $\ket {X_{m};r}$ in $H(m)$ with the state $\ket {X_{k};rd}$ in $H(k)$, 
in conjuction with the compatibility condition of Eq.(\ref{comp}),
imply that we can drop the index $k$.
Also, since we work in $\Lambda[{\bf \Sigma}(n)]$ and ${\Sigma}[{\mathbb Z}(n),{\mathbb Z}(n)]$ is the largest system,
${\mathfrak P}(n)={\bf 1}_n$.

Let ${\cal P} _k(m)$ (where $m|k$) be the projector into the 
$\varphi (m)$-dimensional subspace ${\cal A}_{mk}[h(m)]$ of $H(k)$.
From lemma \ref{790} follows that
\begin{eqnarray}\label{141}
{\cal P} _k(m){\cal P} _k(\ell)={\cal P} _k(m)\delta _{m\ell};\;\;\;\;\;\sum _{m|k}{\cal P} _k(m)={\mathfrak P}(k);\;\;\;\;
m|k;\;\;\;\;\;\ell|k.
\end{eqnarray}
A system ${\Sigma}[{\mathbb Z}(k),{\mathbb Z}(k)]$ in the state $\ket {s}$, `shares' the ${\cal P} _k(m)\ket{s}$ 
part of its state, with its subsystem
${\Sigma}[{\mathbb Z}(m),{\mathbb Z}(m)]$. The ${\cal P} _k(k)\ket{s}$ belongs only to 
${\Sigma}[{\mathbb Z}(k),{\mathbb Z}(k)]$ and it does not belong to its subsystems.

The state of a system can collapse into a state in one of its subsystems with appropriate measurements. 
We consider the system ${\Sigma}[{\mathbb Z}(n),{\mathbb Z}(n)]$ in a state described with the density matrix $\rho_n$. Then
\begin{eqnarray}\label{meas}
\tau (m|\rho _n)={\rm Tr}[\rho_n{\mathfrak P}(m)]
\end{eqnarray}
is the probability that a von Neumann measurement with the operator
\begin{eqnarray}
\Theta (m) =\theta _1{\mathfrak P}(m)+\theta _2[{\bf 1}_n-{\mathfrak P}(m)];\;\;\;\;m|n.
\end{eqnarray}
will collapse the system to the state 
\begin{eqnarray}
\rho =\frac{{\mathfrak P}(m)\rho_n{\mathfrak P}(m)}{\tau (m|\rho _n)}
\end{eqnarray}
of its subsystem ${\Sigma}[{\mathbb Z}(m),{\mathbb Z}(m)]$.
In a similar way we assign probabilities to the projectors ${\cal P}_n(m)$.
We note that $\tau(n|\rho _n)=1$.

The probabilities $\tau (m|\rho _n)$ are universal, in the sense of the following proposition.
\begin{proposition}\label{pro10}
If we embed the system ${\Sigma}[{\mathbb Z}(n),{\mathbb Z}(n)]$ into a larger system ${\Sigma}[{\mathbb Z}(u),{\mathbb Z}(u)]$ (with $n|u$),
then 
\begin{eqnarray}
\tau (m|\rho _n)=\tau [m|{\cal A}'_{nu}(\rho _n)];\;\;\;\;\;m|n|u
\end{eqnarray}
\end{proposition}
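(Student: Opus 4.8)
The plan is to realise the density-matrix embedding ${\cal A}'_{nu}$ as conjugation by an isometry and to show that the measurement projector ${\mathfrak P}(m)$ is left invariant when pulled back along this isometry; the equality of the two probabilities then follows immediately from cyclicity of the trace. First I would set $V={\cal A}_{nu}$ for the state embedding of Eq.(\ref{56}), so that $V\ket{X_n;r}=\ket{X_u;er}$ with $e=u/n$. Since distinct position states are sent to distinct, hence orthonormal, position states, $V$ is a linear isometry of $H(n)$ into $H(u)$ and $V^\dagger V={\bf 1}_n$. Reading off the defining formula Eq.(\ref{56b}) then shows that the induced density-matrix map is exactly conjugation, ${\cal A}'_{nu}(\rho _n)=V\rho _n V^\dagger$.

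The crux is the projector identity
\[
V^\dagger {\mathfrak P}(m) V={\mathfrak P}(m),
\]
where on the left ${\mathfrak P}(m)$ is the projector of Eq.(\ref{oper}) realised in $H(u)$ and on the right the same projector realised in $H(n)$. Conceptually this is the statement, already invoked in the text when the ambient index on ${\mathfrak P}(m)$ was dropped, that the projector depends only on $m$ and not on the host space. I would establish it as follows: in $H(u)$ the range of ${\mathfrak P}(m)$ is the span of the $\ket{X_u;td_2}$ with $d_2=u/m$ and $t=0,\dots ,m-1$, which is precisely the image ${\cal A}_{mu}[H(m)]$; by the compatibility relation Eq.(\ref{comp}), ${\cal A}_{mu}={\cal A}_{nu}\circ {\cal A}_{mn}$, so this image equals $V\big({\cal A}_{mn}[H(m)]\big)$, i.e. $V$ carries the range of ${\mathfrak P}(m)$ in $H(n)$ isometrically onto the range of ${\mathfrak P}(m)$ in $H(u)$. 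Pulling the projection back through $V$ therefore returns the projection in $H(n)$. Concretely one verifies on basis vectors that $V^\dagger {\mathfrak P}(m) V\ket{X_n;r}$ equals $\ket{X_n;r}$ when $(n/m)\mid r$ and vanishes otherwise, matching the action of ${\mathfrak P}(m)$ in $H(n)$; the only point requiring care is the modular bookkeeping, namely that $t(u/m)\equiv r(u/n)\pmod u$ holds exactly when $r=t(n/m)$, which uses $m\mid n\mid u$.

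Finally, combining ${\cal A}'_{nu}(\rho _n)=V\rho _n V^\dagger$, the projector identity, and cyclicity of the trace for the rectangular factors (so that ${\rm Tr}_{H(u)}[V\rho _n V^\dagger {\mathfrak P}(m)]={\rm Tr}_{H(n)}[\rho _n\,V^\dagger {\mathfrak P}(m)V]$), I obtain
\[
\tau[m|{\cal A}'_{nu}(\rho _n)]={\rm Tr}[{\cal A}'_{nu}(\rho _n){\mathfrak P}(m)]={\rm Tr}[\rho _n{\mathfrak P}(m)]=\tau (m|\rho _n),
\]
using the definition Eq.(\ref{meas}) of $\tau$. The main obstacle is precisely the projector identity $V^\dagger {\mathfrak P}(m)V={\mathfrak P}(m)$: every other step is formal bookkeeping, but this is the point at which the compatibility of the embeddings Eq.(\ref{comp}) must genuinely be used, and it is what justifies treating ${\mathfrak P}(m)$ as a single object independent of the supersystem. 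As an independent check I would also compute both sides directly and confirm that each equals $\sum _{t=0}^{m-1}\bra{X_n;t(n/m)}\rho _n\ket{X_n;t(n/m)}$, the sum of the relevant diagonal elements of $\rho _n$ in the position basis.
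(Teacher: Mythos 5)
Your proof is correct, and it takes a somewhat different route from the paper's. The paper argues by direct matrix-element computation: it writes ${\mathfrak P}(m)$ simultaneously as $\sum_{a=0}^{m-1}\ket{X_n;ad_1}\bra{X_n;ad_1}$ and $\sum_{a=0}^{m-1}\ket{X_u;ad_2}\bra{X_u;ad_2}$ (with $d_1=n/m$, $d_2=u/m$), uses the explicit entries of $\rho_u={\cal A}'_{nu}(\rho_n)$, namely $\rho_u(d_3a,d_3b)=\rho_n(a,b)$ with $d_3=u/n$ and zero otherwise, and matches the two diagonal sums via $d_2=d_3d_1$, i.e.\ ${\rm Tr}[\rho_u{\mathfrak P}(m)]=\sum_a\rho_u(ad_2,ad_2)=\sum_a\rho_n(ad_1,ad_1)={\rm Tr}[\rho_n{\mathfrak P}(m)]$. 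You instead factor the argument through two operator-level facts: the embedding of density matrices is conjugation by an isometry, ${\cal A}'_{nu}(\rho_n)=V\rho_nV^\dagger$ with $V^\dagger V={\bf 1}_n$, and the pullback identity $V^\dagger{\mathfrak P}(m)V={\mathfrak P}(m)$, after which cyclicity of the trace finishes the proof for all states at once. The arithmetic content is identical --- your basis-vector verification of the pullback identity ($t(u/m)=r(u/n)$ exactly when $r=t(n/m)$, using $m|n|u$) is precisely the paper's matching of diagonal indices --- but your packaging buys something: it makes explicit the role of the compatibility relation ${\cal A}_{mu}={\cal A}_{nu}\circ{\cal A}_{mn}$ of Eq.(\ref{comp}), it justifies at the operator level the paper's convention of writing ${\mathfrak P}(m)$ without an ambient-space index, and it is manifestly state-independent, showing that the whole measurement statistics of ${\mathfrak P}(m)$, not just this one probability, is preserved under embedding. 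What the paper's version buys in exchange is brevity and self-containedness: it needs no discussion of isometries or of traces with rectangular factors, just a two-line sum. Both arguments are sound.
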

\begin{proof}
Using Eq.(\ref{oper}), we express the operator ${\mathfrak P}(m)$ as
\begin{eqnarray}
{\mathfrak P}(m)=\sum _{a=0}^{m-1}\ket {X_{n};ad_1}\bra{X_{n};ad_1}=\sum _{a=0}^{m-1}\ket {X_{u};ad_2}\bra{X_{u};ad_2};\;\;\;\;\;d_1=\frac{n}{m};\;\;\;\;d_2=\frac{u}{m}.
\end{eqnarray}
If 
\begin{eqnarray}
\rho _n=\sum _{a,b}\rho _n(a,b)\ket {X_{n};a}\bra{X_{n};b}
\end{eqnarray}
then
\begin{eqnarray}\label{eq10}
&&\rho _u={\cal A}'_{nu}(\rho _n)=\sum _{a,b}\rho _u(a',b')\ket {X_{u};a'}\bra{X_{u};b'}\nonumber\\
&&\rho _u(d_3a,d_3b)=\rho _n(a,b);\;\;\;\;\;\;d_3=\frac{u}{n}=\frac{d_2}{d_1}\nonumber\\
&&\rho_u(a',b')=0\;\;{\rm otherwise}
\end{eqnarray}
Therefore
\begin{eqnarray}
{\rm Tr}[\rho _u {\mathfrak P}(m)]=\sum _{a=0}^{m-1}\rho _u(ad_2,ad_2)=\sum _{a=0}^{m-1}\rho _n(ad_1,ad_1)={\rm Tr}[\rho _n {\mathfrak P}(m)].
\end{eqnarray}
\end{proof}

All logical operations are linked to von Neumann measurements.
For $m,k\in {\mathbb D}(n)$,
the ${\mathfrak P}(m\vee k)$ and ${\mathfrak P}(m\wedge k)$ are projectors to the spaces of 
the systems ${\Sigma}[{\mathbb Z}(m),{\mathbb Z}(m)]\vee {\Sigma}[{\mathbb Z}(k),{\mathbb Z}(k)]$
and ${\Sigma}[{\mathbb Z}(m),{\mathbb Z}(m)]\wedge {\Sigma}[{\mathbb Z}(k),{\mathbb Z}(k)]$, correspondingly (see Eq.(\ref{123})).
Starting from a state of ${\Sigma}[{\mathbb Z}(n),{\mathbb Z}(n)]$,
with these projectors we can get states in ${\Sigma}[{\mathbb Z}(m),{\mathbb Z}(m)]\vee {\Sigma}[{\mathbb Z}(k),{\mathbb Z}(k)]$
and ${\Sigma}[{\mathbb Z}(m),{\mathbb Z}(m)]\wedge {\Sigma}[{\mathbb Z}(k),{\mathbb Z}(k)]$.
In a similar way the ${\mathfrak P}(\neg m)$,  ${\mathfrak P}(m\Rightarrow k)$, ${\mathfrak P}(m\Leftrightarrow k)$ are projectors
to the spaces of the systems $\neg {\Sigma}[{\mathbb Z}(m),{\mathbb Z}(m)]$,
${\Sigma}[{\mathbb Z}(m),{\mathbb Z}(m)]\Rightarrow {\Sigma}[{\mathbb Z}(k),{\mathbb Z}(k)]$, 
${\Sigma}[{\mathbb Z}(m),{\mathbb Z}(m)]\Leftrightarrow {\Sigma}[{\mathbb Z}(k),{\mathbb Z}(k)]$, correspondingly.
All these projectors commute with each other.

For later use we also consider the probabilities
\begin{eqnarray}\label{meas}
{\widetilde \tau}(m|\rho _n)={\rm Tr}[\rho _n{\widetilde {\mathfrak P}}(m)]=\tau (m|\rho _n)-\tau (1|\rho _n);\;\;\;\;\;
{\widetilde {\mathfrak P}}(m)={\mathfrak P}(m)-{\mathfrak P}(1)
\end{eqnarray}
that the von Neumann measurement 
\begin{eqnarray}
{\widetilde \Theta} (m) =\theta _T{\widetilde {\mathfrak P}}(m)+\theta _F[{\bf 1}_n-{\widetilde {\mathfrak P}}(m)];\;\;\;\;m|n,
\end{eqnarray}
will collapse the system into a state in the space ${\widetilde H} (m)$ (defined in Eq.(\ref{cde})).
This measurement will give a `true-false' answer to whether the system will collapse to a state inside ${\widetilde H}(m)$.
Repetition of the experiment on an ensemble of systems in the same quantum state, will give the probability ${\widetilde \tau}(m|\rho _n)$.
We note that ${\widetilde \tau}(1|\rho _n)=0$.

\begin{remark}
We note that quantum logic is based on orthomodular lattices\cite{LO1,LO2,LO3,LO4}
(intuitionistic logic in this context is discussed in \cite{Co}).
They are the lattices of closed subspaces of Hilbert spaces, related to arbitrary von Neumann measurements, which in general are non-commutative.
The present work deals with a different problem and uses distributive lattices.
Of course, the space of a subsystem is a subspace of the space of the full system.
However, our concept `subsystem' contains the requirement that
the positions take values in a subgroup of the group of positions of the full system.
In this sense, the concept `subsystem' is intimately related to the `subgroup' rather than to the `subspace'.
A subsystem is a fundamental concept which retains its identity in the semiclassical limit.
A closed subspace is a secondary concept that is used in connection with measurements.
The lattice of the subgroups of a locally cyclic group is distributive\cite{LG2}. 
The groups for positions that we consider (${\mathbb Z}(n)$ and ${\mathbb Q}/{\mathbb Z}$) are locally cyclic 
and therefore it is not surprising that all our lattices are distributive.

The quantum theory describing the system $\Sigma ({\mathbb Z}(n), {\mathbb Z}(n))$ is of course non-commutative, but
the relationship between a system with its subsystems, and the logical connectives that describe it, are linked
to commutative projectors. This connects our quantities to commutative von Neumann measurements,
and this again explains the fact that in our context we use distributive lattices, rather than orthomodular lattices.
\end{remark}

\subsection{Probabilities for disjunctions and conjuctions and their incompatibility with associativity of the join in the Heyting algebra of subsystems $\Lambda[{\bf \Sigma}(n)]$}

\begin{proposition}
In ${\bf \Sigma }(n)$
\begin{itemize}
\item[(1)]
\begin{eqnarray}\label{61}
&&H(m\wedge k)=\bigoplus _{r|m\wedge k}{\cal A}_{r,m\wedge k}[h(r)]=H(m)\cap H(k)\nonumber\\
&&H(m\vee k)=\bigoplus _{r|m\vee k}{\cal A}_{r,m\vee k}[h(r)]
\end{eqnarray}
\item[(2)]
\begin{eqnarray}\label{t1}
&&{\mathfrak P}(m\wedge k)={\mathfrak P}(m){\mathfrak P}(k)\nonumber\\
&&{\mathfrak P}(m){\mathfrak P}(\neg m)={\mathfrak P}(1)
\end{eqnarray}
\item[(3)]
The map $m\;\rightarrow\;\tau (m|\rho _n)$ is order-preserving.
\end{itemize}
All variables in this proposition, belong to ${\mathbb D}(n)$.
\end{proposition}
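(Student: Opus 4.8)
The plan is to reduce all three statements to the orthogonal decomposition of Lemma~\ref{790} together with the elementary projectors ${\cal P}_n(r)$ onto the blocks ${\cal A}_{rn}[h(r)]$. The single fact driving everything is that, viewed as a subspace of $H(n)$, the space $H(m)$ (for $m\in{\mathbb D}(n)$) splits as $H(m)=\bigoplus_{r|m}{\cal A}_{rn}[h(r)]$. This follows by applying Lemma~\ref{790} inside $H(m)$ and then pushing the decomposition forward with ${\cal A}_{mn}$, using the compatibility ${\cal A}_{mn}\circ{\cal A}_{rm}={\cal A}_{rn}$ of Eq.~(\ref{comp}) valid for $r|m|n$. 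Equivalently, at the level of projectors, ${\mathfrak P}(m)=\sum_{r|m}{\cal P}_n(r)$, which is the $n$-based form of the second relation in Eq.~(\ref{141}).

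For part (1), the first equality $H(m\wedge k)=\bigoplus_{r|m\wedge k}{\cal A}_{r,m\wedge k}[h(r)]$ is just Lemma~\ref{790} with $n$ replaced by $m\wedge k$, and the displayed formula for $H(m\vee k)$ is Lemma~\ref{790} with $n$ replaced by $m\vee k$ (legitimate because $m\vee k={\rm LCM}(m,k)$ still divides $n$). The only genuine content is the middle equality $H(m)\cap H(k)=H(m\wedge k)$. Here I would write $H(m)=\bigoplus_{r|m}{\cal A}_{rn}[h(r)]$ and $H(k)=\bigoplus_{s|k}{\cal A}_{sn}[h(s)]$ as direct sums drawn from the \emph{same} mutually orthogonal family $\{{\cal A}_{rn}[h(r)]\}_{r|n}$. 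Because the summands are pairwise orthogonal, the intersection of two such subspaces is exactly the direct sum over the common index set, so $H(m)\cap H(k)=\bigoplus_{r|m,\;r|k}{\cal A}_{rn}[h(r)]=\bigoplus_{r|{\rm GCD}(m,k)}{\cal A}_{rn}[h(r)]=H(m\wedge k)$, the last step using that $r|m$ together with $r|k$ is equivalent to $r|{\rm GCD}(m,k)=m\wedge k$.

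For part (2), I substitute ${\mathfrak P}(m)=\sum_{r|m}{\cal P}_n(r)$ and ${\mathfrak P}(k)=\sum_{s|k}{\cal P}_n(s)$ and expand the product, using the orthogonality ${\cal P}_n(r){\cal P}_n(s)=\delta_{rs}{\cal P}_n(r)$ of Eq.~(\ref{141}). The double sum collapses to $\sum_{r|m,\;r|k}{\cal P}_n(r)=\sum_{r|m\wedge k}{\cal P}_n(r)={\mathfrak P}(m\wedge k)$, which gives the first identity (and incidentally shows these projectors commute). The second identity is then immediate: taking $k=\neg m$ gives ${\mathfrak P}(m){\mathfrak P}(\neg m)={\mathfrak P}(m\wedge\neg m)$, and since $\neg m$ is the pseudocomplement of $m$ we have $m\wedge\neg m={\cal O}=1$, whence ${\mathfrak P}(m){\mathfrak P}(\neg m)={\mathfrak P}(1)$.

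For part (3), order-preserving means $m|k\Rightarrow\tau(m|\rho_n)\le\tau(k|\rho_n)$. When $m|k$ the index set $\{r:r|m\}$ is contained in $\{r:r|k\}$, so ${\mathfrak P}(k)-{\mathfrak P}(m)$ equals the sum of the ${\cal P}_n(r)$ over those $r|k$ that do not divide $m$; being a sum of mutually orthogonal projectors, it is positive semidefinite. Since $\rho_n$ is a density matrix, ${\rm Tr}[\rho_n({\mathfrak P}(k)-{\mathfrak P}(m))]\ge 0$, i.e. $\tau(k|\rho_n)\ge\tau(m|\rho_n)$. The main obstacle in the whole proposition is the intersection formula of part (1): it is the one place where orthogonality of the $h(r)$-blocks, rather than mere containment, is indispensable, since for general subspaces the intersection of two spans can strictly contain the span of their common generators.
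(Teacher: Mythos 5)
Your proposal is correct and follows essentially the same route as the paper: both rest on the orthogonal decomposition of Lemma \ref{790} into the blocks ${\cal A}_{rn}[h(r)]$, the identity ${\mathbb D}(m\wedge k)={\mathbb D}(m)\cap{\mathbb D}(k)$, and positivity of the trace of $\rho_n$ against the projector ${\mathfrak P}(k)-{\mathfrak P}(m)$ for $m|k$. The only difference is one of exposition: you spell out explicitly (via compatibility of the embeddings and orthogonality of the blocks) why the intersection of the two direct sums equals the direct sum over common divisors, a step the paper states more tersely.
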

\begin{proof}
\mbox{}
\begin{itemize}
\item[(1)]
Using lemma \ref{790}, we express the Hilbert spaces $H(k)$ and $H(m)$ as
\begin{eqnarray}\label{60}
H(m)=\bigoplus _{r|m}{\cal A}_{rm}[h(r)];\;\;\;\;H(k)=\bigoplus _{r|k}{\cal A}_{rk}[h(r)]
\end{eqnarray}
For the common divisors $r$ of $m$ and $k$, the $h(r)$ is isomorphic to both the ${\cal A}_{rm}[h(r)]$ and the ${\cal A}_{rk}[h(r)]$.
Physically, these three spaces contain the `same states' with different notations:
\begin{eqnarray}\label{560}
&&\sum _{r}a_r\ket{X_{m\wedge k};r}\;\;\rightarrow\;\;\sum _{r}a_r\ket{X_m;d_1r}\;\;\rightarrow\;\;\sum _{r}a_r\ket{X_k;d_2r}\nonumber\\
&&d_1=\frac{m}{m\wedge k};\;\;\;d_2=\frac{k}{m\wedge k};\;\;\;\;r=0,1,...,(m\wedge k)-1.
\end{eqnarray}
Therefore
\begin{eqnarray}\label{353}
H(m\wedge k)=\bigoplus _{r|m\wedge k}{\cal A}_{r,m\wedge k}[h(r)],
\end{eqnarray}
and similarly for $H(m\vee k)$.
The set of divisors of $m\wedge k$ is the intersection 
of the set of divisors of $m$ with the set of divisors of $k$.
Consequently
\begin{eqnarray}\label{616}
H(m\wedge k)=H(m)\cap H(k).
\end{eqnarray}
\item[(2)]
The first of these equations is proved using Eqs.(\ref{606}),(\ref{61}), in conjuction with the fact that the common divisors of $m,k$ are 
precisely the divisors of the $m\wedge k$.
The second equation follows immediately, from the first one.
\item[(3)]
If $m\prec k$ then ${\rm Tr}\{\rho _n{\mathfrak P}(k)[1-{\mathfrak P}(m)]\}\ge 0$ from which it follows that $\tau (m|\rho _n)\le \tau (k|\rho _n)$.
\end{itemize}
\end{proof}

\begin{proposition}

In ${\bf \Sigma} (n)$
\begin{itemize}
\item[(1)]
the space $T(m_1,m_2)$ 
\begin{eqnarray}\label{e1}
&&T(m_1,m_2)={\rm span}[H(m_1)\cup  H(m_2)]=\bigoplus _{r}{\cal A}_{r_1,m_1}[h(r)]\bigoplus _{r_2}{\cal A}_{r_2,m_2}[h(r)]\nonumber\\
&&r_1\in {\mathbb D}(m_1);\;\;\;\;\;\;r_2\in {\mathbb D}(m_1)-{\mathbb D}(m_1)\cap {\mathbb D}(m_2);\;\;\;\;\;s=m_1+m_2-m_1\wedge m_2,
\end{eqnarray}
is $s$-dimensional, it
contains all superpositions of states in $H(m_1)$ and $H(m_2)$, and it is a subspace of the space $H(m_1\vee m_2)$ of disjunctions. 
\item[(2)]
the  
\begin{eqnarray}\label{r1}
{\mathfrak T}(m_1,m_2)={\mathfrak P}(m_1)+{\mathfrak P}(m_2)-{\mathfrak P}(m_1\wedge m_2)
\end{eqnarray}
is projector to the space $T(m_1,m_2)$.
\end{itemize}
All variables in this proposition, belong to ${\mathbb D}(n)$.
\end{proposition}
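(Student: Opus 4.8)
The plan is to work entirely inside the ambient space $H(n)$, where the compatibility condition (\ref{comp}) lets me identify, for each \emph{common} divisor $r$ of $m_1$ and $m_2$, the embedded copies ${\cal A}_{r,m_1}[h(r)]$ and ${\cal A}_{r,m_2}[h(r)]$ with one and the same subspace ${\cal A}_{r,n}[h(r)]$ of $H(n)$. Concretely, since $r\mid m_1\mid n$ gives ${\cal A}_{m_1,n}\circ {\cal A}_{r,m_1}={\cal A}_{r,n}$ by (\ref{comp}), embedding $H(m_1)$ into $H(n)$ turns the decomposition of Lemma \ref{790} into $\bigoplus_{r\mid m_1}{\cal A}_{r,n}[h(r)]$, and likewise $H(m_2)$ becomes $\bigoplus_{r\mid m_2}{\cal A}_{r,n}[h(r)]$. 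By Lemma \ref{790} (Eq.(\ref{606})) the spaces ${\cal A}_{r,n}[h(r)]$ are mutually orthogonal for distinct $r\mid n$, so the span of $H(m_1)\cup H(m_2)$ is the orthogonal direct sum over the union of the two divisor sets,
\begin{eqnarray}
T(m_1,m_2)=\bigoplus_{r\in {\mathbb D}(m_1)\cup {\mathbb D}(m_2)}{\cal A}_{r,n}[h(r)],
\end{eqnarray}
which is exactly the decomposition written in the statement (the second index set there lists the divisors of $m_2$ that are not already divisors of $m_1$, so that the union is displayed as a disjoint union; that it contains all superpositions of $H(m_1)$ and $H(m_2)$ is immediate, being the very definition of the span).

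For the dimension I would use $\dim {\cal A}_{r,n}[h(r)]=\varphi(r)$ together with the classical identity $\sum_{r\mid N}\varphi(r)=N$. Since a number divides both $m_1$ and $m_2$ if and only if it divides $m_1\wedge m_2={\rm GCD}(m_1,m_2)$, inclusion--exclusion on the divisor sets gives
\begin{eqnarray}
\dim T(m_1,m_2)=\sum_{r\mid m_1}\varphi(r)+\sum_{r\mid m_2}\varphi(r)-\sum_{r\mid m_1\wedge m_2}\varphi(r)=m_1+m_2-(m_1\wedge m_2)=s.
\end{eqnarray}
The inclusion $T(m_1,m_2)\subseteq H(m_1\vee m_2)$ is then immediate from (\ref{61}): every $r\in {\mathbb D}(m_1)\cup {\mathbb D}(m_2)$ divides $m_1\vee m_2$, so ${\mathbb D}(m_1)\cup {\mathbb D}(m_2)\subseteq {\mathbb D}(m_1\vee m_2)$ and the orthogonal summands of $T$ all occur among those of $H(m_1\vee m_2)$; in general the inclusion is strict, which is the point exploited in the sequel.

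For part (2) I would introduce the mutually orthogonal projectors ${\cal P}_n(r)$ of rank $\varphi(r)$ onto ${\cal A}_{r,n}[h(r)]$, which by Lemma \ref{790} satisfy ${\cal P}_n(r){\cal P}_n(r')=\delta_{rr'}{\cal P}_n(r)$, and which by (\ref{141}) combined with the embedding above give ${\mathfrak P}(m)=\sum_{r\mid m}{\cal P}_n(r)$. Substituting this into ${\mathfrak T}(m_1,m_2)={\mathfrak P}(m_1)+{\mathfrak P}(m_2)-{\mathfrak P}(m_1\wedge m_2)$, the coefficient of each ${\cal P}_n(r)$ becomes the Iverson combination $[r\mid m_1]+[r\mid m_2]-[r\mid m_1\wedge m_2]$, which by the same GCD observation equals $1$ precisely when $r\in {\mathbb D}(m_1)\cup {\mathbb D}(m_2)$ and $0$ otherwise. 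Hence ${\mathfrak T}(m_1,m_2)=\sum_{r\in {\mathbb D}(m_1)\cup {\mathbb D}(m_2)}{\cal P}_n(r)$, a sum of orthogonal projectors onto pairwise orthogonal subspaces, which is therefore itself an orthogonal projector with range exactly $T(m_1,m_2)$.

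The only genuinely delicate point is the identification in the first paragraph: one must verify that a summand indexed by a common divisor $r\mid m_1\wedge m_2$ is counted only once, i.e. that ${\cal A}_{r,m_1}[h(r)]$ and ${\cal A}_{r,m_2}[h(r)]$ land on the literally same subspace of $H(n)$ rather than on two distinct orthogonal copies. This is guaranteed by the compatibility (\ref{comp}) (equivalently, by the explicit position-state description (\ref{560})), and it is precisely what makes the correction term $-{\mathfrak P}(m_1\wedge m_2)$ the right one so that ${\mathfrak T}$ does not overcount; everything else reduces to the divisor-sum identity and inclusion--exclusion.
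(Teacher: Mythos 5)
Your proposal is correct and follows essentially the same route as the paper: both rest on the orthogonal decomposition of Lemma \ref{790}, inclusion--exclusion over the divisor sets (with the observation that common divisors of $m_1,m_2$ are exactly the divisors of $m_1\wedge m_2$), the identity $\sum_{r|N}\varphi(r)=N$ for the dimension, and Eq.(\ref{141}) for the projector identity. The only difference is one of exposition — you spell out explicitly, via the compatibility relation (\ref{comp}), why the embedded copies of $h(r)$ coming from $m_1$ and from $m_2$ coincide inside $H(n)$, a point the paper compresses into the phrase ``avoid double counting of the divisors.''
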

\begin{proof}
\mbox{}
\begin{itemize}
\item[(1)]
The relation of Eq.(\ref{e1}) follows immediately from Eq.(\ref{606}), taking into account that we need to avoid double counting of the divisors. 
In order to prove that the dimension of the space $T(m_1,m_2)$ is equal to $s$, we use the equations
\begin{eqnarray}
&&\sum _{r_1}\varphi (r_1)=m_1;\;\;\;\;r_1\in {\mathbb D}(m_1)\nonumber\\
&&\sum _{r_2}\varphi (r_2)=m_2-m_1\wedge m_2;\;\;\;\;\;\;r_2\in {\mathbb D}(m_1)-{\mathbb D}(m_1)\cap {\mathbb D}(m_2)
\end{eqnarray}
The relation ${\mathbb D}(m_1)\cup {\mathbb D}(m_2)\subseteq {\mathbb D}(m_1\vee m_2)$
implies that the space $T(m_1,m_2)$ is a subspace of $H(m_1\vee m_2)$.
\item[(2)]
The proof of Eq.(\ref{r1}) is based on the relation $\sum _{r|m}{\cal P} _m(r)={\mathfrak P}(m)$ (Eq.(\ref{141})) 
in conjuction with the following identity from set theory:
\begin{eqnarray}
{\mathbb D}(m_1)\cup{\mathbb D}(m_2)=[{\mathbb D}(m_1)-{\mathbb D}(m_1)\cap {\mathbb D}(m_2)]\cup{\mathbb D}(m_2).
\end{eqnarray}
The term ${\mathfrak P}(m_1\wedge m_2)$ corrects the `double counting'.
\end{itemize}
\end{proof}
\begin{proposition}

In ${\bf \Sigma } (n)$
\begin{itemize}
\item[(1)]
the space
\begin{eqnarray}\label{e2}
&&S(m_1,m_2)=\bigoplus _{r}{\cal A}_{r,m_1\vee m_2}[h(r)]\nonumber\\
&&r\in {\mathbb D}(m_1\vee m_2)-[{\mathbb D}(m_1)\cup {\mathbb D}(m_2)]
\end{eqnarray}
is orthogonal to the space $T(m_1,m_2)$ and
\begin{eqnarray}\label{1000}
H(m_1\vee m_2)=T(m_1,m_2)\oplus S(m_1,m_2).
\end{eqnarray}
The $S(m_1,m_2)$ can be called `space of disjunctions which are not superpositions'.
\item[(2)]
The projector ${\mathfrak S}(m_1,m_2)$ to the space $S(m_1,m_2)$, is given by
\begin{eqnarray}\label{t2}
{\mathfrak S}(m_1,m_2)&=&{\mathfrak P}(m_1\vee m_2)-{\mathfrak T}(m_1,m_2)
\nonumber\\&=&{\mathfrak P}(m_1\vee m_2)-{\mathfrak P}(m_1)-{\mathfrak P}(m_2)
+{\mathfrak P}(m_1\wedge m_2).
\end{eqnarray}
The dimension of the space $S(m_1,m_2)$ is $m_1\vee m_2-m_1-m_2+m_1\wedge m_2$.

\item[(3)]
\begin{eqnarray}\label{467}
&&m_1\prec m_2\;\;\rightarrow\;\;{\mathfrak S}(m_1,m_2)=0\nonumber\\
&&{\mathfrak S}(m,n)={\mathfrak S}(1,m)=0\nonumber\\
&&r\prec m_1\;{\rm or}\;r\prec m_2\;\;\rightarrow \;\;{\mathfrak P}(r){\mathfrak S}(m_1,m_2)=0;
\end{eqnarray}
\item[(4)]
The quantum probabilities corresponding to the above projectors are
\begin{eqnarray}\label{t20}
\sigma (m_1,m_2|\rho _n)&=&{\rm Tr}[\rho _n{\mathfrak S}(m_1,m_2)]\nonumber\\
&=&\tau(m_1\vee m_2|\rho _n)-\tau (m_1|\rho _n)-\tau (m_2|\rho _n)
+\tau(m_1\wedge m_2|\rho _n)
\end{eqnarray}
The probabilities ${\widetilde \tau}(m|\rho _n)$ defined in Eq.(\ref{meas}), also satisfy this relation.
The $\sigma (m_1,m_2|\rho _n)$ quantify the difference between disjunctions and superpositions.
\item[(5)]
The above quantum probabilities, are compatible with the associativity of the join in the Heyting algebra of subsystems $\Lambda[{\bf \Sigma}(n)]$, 
only if $\sigma (m_1,m_2|\rho _n)=0$ for all density matrices, i.e., if the variables $m_i$ belong to the same chain within ${\mathbb D}(n)$.
\end{itemize}
All variables in this proposition, belong to ${\mathbb D}(n)$.
\end{proposition}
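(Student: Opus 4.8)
The plan is to build everything on the orthogonal decomposition $H(N)=\bigoplus_{r|N}{\cal A}_{rN}[h(r)]$ of Lemma \ref{790}, applied with $N=m_1\vee m_2$, and to read off every space and projector from the corresponding set of divisors. For part (1), the summands ${\cal A}_{r,m_1\vee m_2}[h(r)]$ indexed by distinct $r$ are mutually orthogonal, so orthogonality of $S(m_1,m_2)$ and $T(m_1,m_2)$ is immediate once I observe that the index set of $S$, namely ${\mathbb D}(m_1\vee m_2)-[{\mathbb D}(m_1)\cup{\mathbb D}(m_2)]$, is disjoint from the index set of $T$, namely ${\mathbb D}(m_1)\cup{\mathbb D}(m_2)$ (from Eq.(\ref{e1})). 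Since these two index sets partition ${\mathbb D}(m_1\vee m_2)$, summing the two subspaces reproduces $H(m_1\vee m_2)$ by Eq.(\ref{61}), giving Eq.(\ref{1000}). For part (2), the projector onto a direct sum of mutually orthogonal summands is the sum of the ${\cal P}_{m_1\vee m_2}(r)$, so ${\mathfrak S}(m_1,m_2)$ is that sum over the index set of $S$; subtracting it from $\sum_{r|m_1\vee m_2}{\cal P}_{m_1\vee m_2}(r)={\mathfrak P}(m_1\vee m_2)$ (Eq.(\ref{141})) leaves the sum over ${\mathbb D}(m_1)\cup{\mathbb D}(m_2)$, which is exactly ${\mathfrak T}(m_1,m_2)$ by Eq.(\ref{r1}). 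This yields Eq.(\ref{t2}), and the dimension count follows from $\dim S=\dim H(m_1\vee m_2)-\dim T=(m_1\vee m_2)-s$.

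Parts (3) and (4) are then bookkeeping. For (3), $m_1\prec m_2$ gives ${\mathbb D}(m_1)\subseteq{\mathbb D}(m_2)$ and $m_1\vee m_2=m_2$, so the index set of $S$ is empty and ${\mathfrak S}=0$; the cases ${\mathfrak S}(m,n)$ and ${\mathfrak S}(1,m)$ are the instances $m\prec n$ and $1\prec m$. For $r\prec m_1$ (or $r\prec m_2$) every divisor of $r$ lies in ${\mathbb D}(m_1)$ (resp. ${\mathbb D}(m_2)$), so ${\mathfrak P}(r)=\sum_{r'|r}{\cal P}_{m_1\vee m_2}(r')$ is a sum of projectors orthogonal to every summand of ${\mathfrak S}$, whence ${\mathfrak P}(r){\mathfrak S}=0$. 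Part (4) is obtained by taking ${\rm Tr}[\rho_n\,\cdot\,]$ of Eq.(\ref{t2}) and using $\tau(m|\rho_n)={\rm Tr}[\rho_n{\mathfrak P}(m)]$; the same relation for ${\widetilde\tau}$ follows because the four ${\mathfrak P}(1)$ contributions in ${\widetilde{\mathfrak P}}(m_1\vee m_2)-{\widetilde{\mathfrak P}}(m_1)-{\widetilde{\mathfrak P}}(m_2)+{\widetilde{\mathfrak P}}(m_1\wedge m_2)$ cancel.

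The heart of the proposition is part (5), where the plan has two ingredients. First, since ${\mathfrak S}(m_1,m_2)$ is an orthogonal projector it is positive semidefinite, so $\sigma(m_1,m_2|\rho_n)={\rm Tr}[\rho_n{\mathfrak S}(m_1,m_2)]$ vanishes for all density matrices $\rho_n$ if and only if ${\mathfrak S}(m_1,m_2)=0$; by part (2) this is equivalent to ${\mathbb D}(m_1\vee m_2)={\mathbb D}(m_1)\cup{\mathbb D}(m_2)$. I would then show this set identity holds precisely when $m_1\prec m_2$ or $m_2\prec m_1$: if neither divides the other there are distinct primes $p,q$ with $e_p(m_1)>e_p(m_2)$ and $e_q(m_2)>e_q(m_1)$, and $p^{e_p(m_1)}q^{e_q(m_2)}$ divides $m_1\vee m_2$ but neither $m_1$ nor $m_2$, so the index set of $S$ is nonempty; conversely comparability makes $m_1\vee m_2$ the larger element and the identity is trivial. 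Hence $\sigma\equiv 0$ iff the $m_i$ lie in a common chain of ${\mathbb D}(n)$. The second ingredient makes precise the sense in which a nonzero $\sigma$ obstructs associativity: rewriting part (4) as $\tau(m_1\vee m_2|\rho_n)+\tau(m_1\wedge m_2|\rho_n)=\tau(m_1|\rho_n)+\tau(m_2|\rho_n)+\sigma(m_1,m_2|\rho_n)$, I would observe that $m\mapsto\tau(m|\rho_n)$ is a valuation (modular function) on the distributive lattice $\Lambda[{\bf\Sigma}(n)]$ exactly when $\sigma\equiv 0$. Only a valuation can be built up, order-independently, by the additive inclusion--exclusion rule that treats the disjunction as a superposition; iterating that rule on a triple join in the two associations $(m_1\vee m_2)\vee m_3$ and $m_1\vee(m_2\vee m_3)$, using distributivity $(m_1\vee m_2)\wedge m_3=(m_1\wedge m_3)\vee(m_2\wedge m_3)$ and the two-variable identity, I would read off that the two results coincide for all $\rho_n$ only when the pertinent $\sigma$'s vanish. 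Combined with the first ingredient, this forces the chain condition.

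I expect the main obstacle to be exactly this last step: pinning down an unambiguous notion of ``compatible with associativity'' and checking that the pairwise inclusion--exclusion build-up of a triple disjunction is genuinely grouping-dependent once a $\sigma$ is nonzero, rather than being an artefact that cancels. Everything else is a direct consequence of the orthogonal decomposition of Lemma \ref{790} and the positivity of the projector ${\mathfrak S}(m_1,m_2)$.
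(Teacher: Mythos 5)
Your parts (1)--(4) are correct and follow essentially the same route as the paper: everything is read off from the orthogonal decomposition of Lemma \ref{790} applied to $H(m_1\vee m_2)$, with disjointness/partition of the divisor index sets giving orthogonality, Eq.(\ref{1000}), Eq.(\ref{t2}) and the dimension count. (In part (3) the paper verifies ${\mathfrak P}(r){\mathfrak S}(m_1,m_2)=0$ algebraically from ${\mathfrak P}(m\wedge k)={\mathfrak P}(m){\mathfrak P}(k)$ rather than via index sets, but your argument is equally valid.)

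Part (5) is where you genuinely diverge, and it contains the one real gap --- exactly at the step you flag. Your first ingredient is correct, and is in fact sharper than what the paper writes: positivity of the projector ${\mathfrak S}(m_1,m_2)$ gives $\sigma(m_1,m_2|\rho_n)=0$ for all $\rho_n$ iff ${\mathfrak S}(m_1,m_2)=0$ iff ${\mathbb D}(m_1\vee m_2)={\mathbb D}(m_1)\cup{\mathbb D}(m_2)$, and your prime-exponent construction correctly shows this is equivalent to comparability of $m_1,m_2$; the paper asserts the chain characterization without spelling this out. Your second ingredient, however, does not work as stated: if you build up a triple join by the additive inclusion--exclusion rule and \emph{fully expand}, using distributivity $(m_1\vee m_2)\wedge m_3=(m_1\wedge m_3)\vee(m_2\wedge m_3)$ and quantum values only on meets, both groupings collapse to the same symmetric expression $\sum_i\tau(m_i|\rho_n)-\sum_{i<j}\tau(m_i\wedge m_j|\rho_n)+\tau(m_1\wedge m_2\wedge m_3|\rho_n)$, so no grouping dependence ever appears and no contradiction is reached. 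A discrepancy appears only under a mixed convention, e.g.\ feeding the \emph{true} quantum value $\tau(m_1\vee m_2|\rho_n)$ of the inner join into the outer inclusion--exclusion step, in which case the two groupings differ by $\sigma(m_1,m_2|\rho_n)-\sigma(m_2,m_3|\rho_n)-\sigma(m_1\wedge m_3,m_2\wedge m_3|\rho_n)+\sigma(m_1\wedge m_2,m_1\wedge m_3|\rho_n)$ (for pairwise coprime $m_i$ this is $\sigma(m_1,m_2|\rho_n)-\sigma(m_2,m_3|\rho_n)$, generically nonzero). So ``compatible with associativity'' must first be pinned down, which is precisely your stated obstacle. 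The paper resolves this differently, by a Cox-type functional-equation argument: it \emph{postulates} a continuous composition law ${\widetilde\tau}(m_1\vee m_2|\rho_n)=F[{\widetilde\tau}(m_1|\rho_n),{\widetilde\tau}(m_2|\rho_n)]$ for coprime $m_1,m_2$; associativity of the join forces $F[F(x,y),z]=F[x,F(y,z)]$, whose continuous solutions are $F(x,y)=g^{-1}[g(x)+g(y)]$ (Acz\'el); the states with $\sigma=0$ force $g$ to be linear, hence $F(x,y)=x+y$, so compatibility demands $\sigma(m_1,m_2|\rho_n)=0$ for all density matrices --- which, by your ingredient, holds only when the variables are comparable. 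To complete your proof you would need either to import this functional-equation step or to commit to, and justify, a convention under which the grouping dependence you want is actually exhibited.
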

\begin{proof}
\item[(1)]
Lemma \ref{790}, and the fact that divisors entering in Eq.(\ref{e2}) for the space $S(m_1,m_2)$
are different from the divisors entering in Eq.(\ref{e1}) for the space $T(m_1,m_2)$, proves
the orthogonality of the two spaces.
Also the fact that
\begin{eqnarray}\label{e29}
\{{\mathbb D}(m_1\vee m_2)-[{\mathbb D}(m_1)\cup {\mathbb D}(m_2)]\}\cup [{\mathbb D}(m_1)\cup {\mathbb D}(m_2)]={\mathbb D}(m_1\vee m_2)
\end{eqnarray}
proves Eq.(\ref{1000}).
\item[(2)]
The orthogonality of the spaces $S(m_1,m_2)$ and $T(m_1,m_2)$, together with Eq.(\ref{r1}),
prove Eq.(\ref{t2}).
It then follows that the dimension of $S(m_1,m_2)$ is $m_1\vee m_2-m_1-m_2+m_1\wedge m_2$.

\item[(3)]
If $m_1\prec m_2$ then  ${\mathfrak P}(m_1\vee m_2)={\mathfrak P}(m_2)$ and 
${\mathfrak P}(m_1\wedge m_2)={\mathfrak P}(m_1)$ from which follows that
${\mathfrak S}(m_1,m_2)=0$. A corollary of this is that ${\mathfrak S}(m,n)={\mathfrak S}(1,m)=0$.

If $r\prec m_1$ then
\begin{eqnarray}
{\mathfrak S}(m_1,m_2){\mathfrak P}(r)&=&{\mathfrak P}[(m_1\vee m_2)\wedge r]-{\mathfrak P}(m_1\wedge r)-{\mathfrak P}(m_2\wedge r)
+{\mathfrak P}(m_1\wedge m_2\wedge r)\nonumber\\&=&{\mathfrak P}(r)-{\mathfrak P}(r)-{\mathfrak P}(m_2\wedge r)+{\mathfrak P}(m_2\wedge r)=0
\end{eqnarray}

\item[(4)]
This follows immediately from Eq.(\ref{t2}).
\item[(5)]
Classical probabilities $q(a_i)$ associated to events $a_i$, obey the relation
\begin{eqnarray}\label{cla}
q(a_1\vee a_2)-q(a_1)-q(a_2)+q(a_1\wedge a_2)=0.
\end{eqnarray}
This is proved in \cite{Cox1,Cox2,Jaynes}. The proof of our statement here, is similar to this but it involves quantum probabilities.

We first consider the case where any pair of the $m_1,m_2,m_3$ are coprime.
We assume that 
\begin{eqnarray}
{\widetilde \tau}(m_1 \vee m_2 |\rho _n)=F[{\widetilde \tau } (m_1|\rho _n),{\widetilde \tau }(m_2|\rho _n)]
\end{eqnarray}
where $F$ is a continuous function of two variables.
Then the associativity of the join operation gives
\begin{eqnarray}\label{997}
&&{\widetilde \tau }[(m_1\vee m_2)\vee m_3|\rho _n]={\widetilde \tau }[m_1\vee (m_2\vee m_3)|\rho _n]\;\;\rightarrow \nonumber\\
&&F\{F[{\widetilde \tau }(m_1|\rho _n),{\widetilde\tau }(m_2|\rho _n)],{\widetilde\tau }(m_3|\rho _n)\}=
F\{{\widetilde\tau }(m_1|\rho _n),F[{\widetilde \tau }(m_2|\rho _n),{\widetilde \tau }(m_3|\rho _n)]\}.
\end{eqnarray}
The equation
\begin{eqnarray}\label{998}
F[F(x,y),z]=F[x,F(y,z)]
\end{eqnarray}
is known as the associativity equation.
It is easily seen that if $F(x,y)$ can be written as 
\begin{eqnarray}\label{999}
F(x,y)=g^{-1}[g(x)+g(y)]
\end{eqnarray}
where $g(x)$ is any continuous strictly monotonic function, then Eq.(\ref{998}) is satisfied.
The converse also holds, but its
proof is complex for general continuous functions, and it is given in \cite{Aczel}.
The proof simplifies if we assume that $F$ is a differentiable function, and is given in \cite{Cox1,Cox2,Jaynes}.
Further discussion on the assumptions required for the proof can be found in \cite{H}.
From Eq.(\ref{999}) it follows that the associativity property of Eq.(\ref{997}), will be satisfied if there exist a function $g(x)$ 
such that for all density matrices
\begin{eqnarray}\label{1001}
g[{\widetilde \tau}(m_1 \vee m_2|\rho _n)]=g[{\widetilde \tau } (m_1|\rho _n)]+[g[{\widetilde \tau } (m_2|\rho _n)];\;\;\;\;\;m_1\wedge m_2=1.
\end{eqnarray}
But if there exists such a function it can only be $g(x)=\lambda x$,
because Eq.(\ref{t20}) shows that the density matrices for which $\sigma (m_1,m_2|\rho _n)=0$ with coprime $m_1,m_2$, obey the relation
\begin{eqnarray}\label{899}
{\widetilde \tau}(m_1 \vee m_2|\rho _n)={\widetilde \tau } (m_1|\rho _n)+{\widetilde \tau } (m_2|\rho _n).
\end{eqnarray}
Therefore we adopt the function $g(x)=\lambda x$, with $\lambda=1$.
Going from coprime $m_1,m_2$ to general $m_1,m_2$, we have to replace Eq.(\ref{1001}) with
\begin{eqnarray}
\tau(m_1\vee m_2|\rho _n)-\tau (m_1|\rho _n)-\tau (m_2|\rho _n)
+\tau(m_1\wedge m_2|\rho _n)=0. 
\end{eqnarray}
The term $\tau(m_1\wedge m_2|\rho _n)$ corrects the double counting.
This proves that the quantum probabilities, are compatible with the associativity of the join in the Heyting algebra $\Lambda[{\bf \Sigma}(n)]$, 
only if $\sigma (m_1,m_2|\rho _n)=0$.
This occurs for all density matrices, only if the variables $m_i$ belong to the same chain.
The term `probability' is intimately connected with certain properties, which are violated in the case of non-zero $\sigma (m_1,m_2|\rho _n)$.

\end{proof}
\begin{example}
We consider the systems in the set $\Sigma (18)$.
The $H(2\vee 3)=H(6)$. The space $H(2)$ (when embedded into $H(6)$) contains superpositions of the states
$\ket{X_6;0}$, $\ket{X_6;3}$, and the space $H(3)$ (when embedded into $H(6)$) contains superpositions of the states $\ket{X_6;0}$, $\ket{X_6;2}$, $\ket{X_6;4}$.
Therefore, the 
${\rm span}[H(2)\cup H(3)]$ is $4$-dimensional space and it contains superpositions of the states $\ket{X_6;0}$, $\ket{X_6;2}$, $\ket{X_6;3}$, $\ket{X_6;4}$.
Then the $S(2,3)$ is $2$-dimensional space and it contains superpositions of the states $\ket{X_6;1}$, $\ket{X_6;5}$. 
In terms of projectors
\begin{eqnarray}
&&{\widetilde {\mathfrak P}}(2)=\ket{X_6;3}\bra{X_6;3}\nonumber\\
&&{\widetilde {\mathfrak P}}(3)=\ket{X_6;2}\bra{X_6;2}+\ket{X_6;4}\bra{X_6;4}\nonumber\\
&&{\widetilde {\mathfrak P}}(2\vee 3)={\widetilde {\mathfrak P}}(2)+{\widetilde {\mathfrak P}}(3)+\ket{X_6;1}\bra{X_6;1}+\ket{X_6;5}\bra{X_6;5},
\end{eqnarray}
and from this follows that Eq.(\ref{899}) is not valid.
\end{example}
\begin{remark}
Classical probabilities are intimately related to set theory, and the probability of a disjunction is related to the union of sets.
The properties of sets lead to Eq.(\ref{cla}).
In Hilbert spaces we have two analogues to this, the $H(m_1\vee m_2)$ and the $T(m_1,m_2)={\rm span}[H(m_1)\cup H(m_2)]$.
Here $H(m_1\vee m_2)$ is the space of the smallest subsystem in the set $\Sigma (n)$, which contains the space ${\rm span}[H(m_1)\cup H(m_2)]$.
Consequently, we have two probabilities $\tau (m_1\vee m_2)$ and $\tau(m_1)+\tau(m_2)-\tau(m_1\wedge m_2)$
which are in general different, but they are the same when $m_1\prec m_2$, i.e., when $m_1,m_2$ are elements of a chain. 
This leads to the concept of contextuality, which here means that quantum probabilities are compatible 
with associativity of the join in the Heyting algebra of subsystems $\Lambda[{\bf \Sigma}(n)]$, only if the variables belong in the same chain (context).

The space $H(m_1\vee m_2)$ is invariant under the Fourier transforms $F_{m_1\vee m_2}$, while the space ${\rm span}[H(m_1)\cup H(m_2)]$ is not invariant under 
these Fourier transforms.
States in the space $S(m_1,m_2)$ belong entirely in the space $H(m_1\vee m_2)$ and yet they are not superpositions of states in $H(m_1)$ and $H(m_2)$. 
Disjunction of subsystems is more general concept than superposition.
This is because we require that there is a group of positions related to the states in $H(m_1\vee m_2)$,
which has as subgroups the groups ${\mathbb Z}(m_1)$ and ${\mathbb Z}(m_2)$ of positions, related to the states in the spaces $H(m_1)$ and $H(m_2)$,
correspondingly.
This is a feature of disjunction of quantum subsystems, which has no classical analogue. 
Usually non-classical behaviour is related to non-commutativity, superposition or entanglement. The disjunction of quantum subsystems as described above, 
is a novel feature which is not related to any of those.

\end{remark}

\subsection{Chains as contexts}

Various features of contextuality in quantum mechanics, have been discussed in the literature.
Ref \cite{C1} considers two sets of measurements $\{A,B,C,...\}$ and $\{A,L,M,...\}$. The measurements in each set 
commute with each other, but the measurements $B,C,...$, might not commute with the measurements $L,M,...$ (Mermin's square).
Contextuality is the fact that the measurement $A$ performed in conjuction with
$B,C,...$, might give a different result from the measurement $A$ performed in conjuction with the measurements $L,M,...$.
Another feature of contextuality is the lack of a joint probability distribution which has as marginals all the measured probability distributions.  The logical incompatibility of a set of measurements which violate Bell inequalities, has also been stressed in the literature.

In the present paper we have shown that associativity of the join in the Heyting algebra is incompatible with quantum probabilities,
which unlike classical probabilities, do not obey Eq.(\ref{cla}). However, within a chain quantum probabilities do obey this equation, and are compatible with the associativity of the join in the Heyting algebra. This leads to the following definition for a context.

\begin{definition}\label {def12}
A context is a subset of $\Lambda [\Sigma (n)]$ with labels in a subset ${\mathfrak C}(n)$ of ${\mathbb D}(n)$ 
which is a chain, so that $\sigma (m_1,...,m_\ell|\rho _n)=0$ for all $m_i\in {\mathfrak C}(n)$
and for all density matrices $\rho _n$. Then a relation analogous to Eq.(\ref{cla}) for classical probabilities, also holds for the quantum probabilities
${\widetilde \tau} (k|\rho _n)$, with $k\in {\mathfrak C}(n)$. An element of ${\mathbb D}(n)$ belongs in general, to many different contexts (chains).
\end{definition}

\begin{remark}
We can also define state-dependent contexts.
For a given set of density matrices $R=\{\rho _n\}$, a state-dependent context is a subset ${\mathfrak C}(n|R)$ of ${\mathbb D}(n)$, such that
$\sigma (m_1,...,m_\ell|\rho _n)=0$ for all $m_i\in {\mathfrak C}(n)$, and for all $\rho _n \in R$.
In this case a context can be larger than a chain.
In this paper we are interested in state-independent contexts, associated with chains, as in definition \ref{def12}.
\end{remark}
\begin{remark}
We use the term non-contextual quantum mechanics, for a theory where $\sigma (m_1,...,m_\ell|\rho _n)=0$ for all $m_i\in {\mathbb D}(n)$, and for all
density matrices. In such a theory, quantum probabilities obey Eq.(\ref{cla}), and therefore they are compatible with the associativity of the join in the Heyting algebra of subsystems. Below we show that logical Bell inequalities hold in non-contextual quantum mechanics.
They are violated by nature, and this proves that quantum mechanics is a contextual theory.
\end{remark}
\paragraph*{Pseudo-distances in non-contextual quantum mechanics}:
Eq.(\ref{t20}) with $\sigma (m_1,m_2|\rho _n)=0$ shows that ${\tau} (m|\rho _n)$ is a valuation.
In addition to that if $m\prec k$ then ${\tau} (m|\rho _n)\le {\tau} (k|\rho _n)$.
Therefore the
\begin{eqnarray}\label{bn}
{\mathfrak d}(m,k|\rho _n)={\widetilde \tau} (m\vee k|\rho _n)-{\widetilde \tau} (m\wedge k|\rho _n)
\end{eqnarray}
is a pseudo-distance, and $\Lambda[{\mathbb D}(n)]$ is a pseudo-metric lattice.
In contextual quantum mechanics, only a chain (context) $\Lambda[{\mathfrak C}(n)]$ is a pseudo-metric lattice.

\subsection{Logical Bell inequalities with Heyting factors }

Logical Bell inequalities have been studied in ref \cite{A1}, for the case of Boolean variables. 
In our logical Bell inequalities we use probabilities related with projectors to subsystems. Also, we have Heyting variables, and we get generalized logical Bell 
inequalities that contain `Heyting factors'.
\begin{proposition}\label{ineq}
In non-contextual quantum mechanics:
\begin{itemize}
\item[(1)]
Let  $m_1,...,m_\ell \in {\mathbb D}(n)-\{1\}$, and
$\rho _n$ be a density matrix describing the system ${\Sigma}[{\mathbb Z}(n),{\mathbb Z}(n)]$.
Then
\begin{eqnarray}\label{77}
\sum _{i=1}^\ell {\widetilde \tau}(m_i|\rho _n)\le \ell -\tau (r|\rho _n)-\sum _{i=1}^\ell f_i
\end{eqnarray}
where
\begin{eqnarray}\label{77a}
&&r=\neg (m_1\wedge ...\wedge m_\ell)=\prod _{p\in \pi}p^{e_p(n)};\;\;\;\pi=\varpi (n)-[\varpi (m_1)\cap...\cap \varpi (m_\ell)]\nonumber\\
&&f_i=1-\tau (m_i\vee \neg m_i|\rho _n)\ge 0.
\end{eqnarray}
$f_i$ are `Heyting factors', which are related to the difference between Heyting and Boolean algebras. 
If $m_i$ belongs to the Boolean algebra ${\mathbb D}^B(n)$, in which case $\Sigma [{\mathbb Z}(m_i),{\mathbb Z}(m_i)]$
is a $\pi$-Hall subsystem of $\Sigma [{\mathbb Z}(n),{\mathbb Z}(n)]$, then $f_i=0$.
\item[(2)]
In the special case that $m_1\wedge ...\wedge m_\ell=1$ this reduces to
\begin{eqnarray}\label{78}
\sum _{i=1}^\ell {\widetilde \tau}(m_i|\rho _n)\le \ell -1-\sum _{i=1}^\ell f_i
\end{eqnarray}
\item[(3)]
Eq.(\ref{77}) is universal, in the sense that if we
embed the system ${\Sigma}[{\mathbb Z}(n),{\mathbb Z}(n)]$ into a larger system ${\Sigma}[{\mathbb Z}(u),{\mathbb Z}(u)]$ (with $n|u$),
then the probabilities entering in Eq.(\ref{77}) remain the same: 
\begin{eqnarray}\label{m8}
&&{\widetilde \tau}(m|\rho _n)={\widetilde \tau}(m|\rho _u);\;\;\;\;\;\rho _u={\cal A}'_{nu}(\rho _n);\;\;\;\;m|n|u\nonumber\\
&&\tau (m\vee \neg m|\rho _n)=\tau (m\vee \neg 'm|\rho _u).
\end{eqnarray}
Here $\neg m$ and $\neg 'm$ are the negations of $m$ in ${\Lambda}[{\mathbb D}(n)]$ and ${\Lambda}[{\mathbb D}(u)]$), correspondingly.
\end{itemize}
\end{proposition}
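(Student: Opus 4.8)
The engine of the whole proof is the observation that in non-contextual quantum mechanics the hypothesis $\sigma(m_1,m_2|\rho_n)=0$ for every pair (Eq.(\ref{t20})) is precisely the statement that $m\mapsto\tau(m|\rho_n)$ is a \emph{valuation} on the distributive lattice $\Lambda[{\mathbb D}(n)]$, i.e. $\tau(a\vee b|\rho_n)+\tau(a\wedge b|\rho_n)=\tau(a|\rho_n)+\tau(b|\rho_n)$. Since $\tau(k|\rho_n)={\rm Tr}[\rho_n{\mathfrak P}(k)]\ge 0$ for all $k$, this valuation is non-negative, and hence subadditive on joins. The plan is to reduce the inequality (\ref{77}) by pure algebra to a single subadditivity statement for $\tau$, and then to close it using the \emph{second} de Morgan law, which is where the Stone-lattice property enters.

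First I would carry out the reduction of part (1). Substituting the definitions $\widetilde\tau(m_i|\rho_n)=\tau(m_i|\rho_n)-\tau(1|\rho_n)$ and $f_i=1-\tau(m_i\vee\neg m_i|\rho_n)$ into (\ref{77}), and using that $\neg m_i$ is the pseudocomplement, so that $m_i\wedge\neg m_i={\cal O}=1$, the valuation identity gives
\begin{eqnarray}
\tau(m_i\vee\neg m_i|\rho_n)=\tau(m_i|\rho_n)+\tau(\neg m_i|\rho_n)-\tau(1|\rho_n).
\end{eqnarray}
After inserting this and cancelling the common terms $\sum_i\tau(m_i|\rho_n)-\ell\,\tau(1|\rho_n)$ from both sides, the entire inequality (\ref{77}) collapses to the compact claim
\begin{eqnarray}
\tau(r|\rho_n)\le\sum_{i=1}^\ell\tau(\neg m_i|\rho_n).
\end{eqnarray}

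To finish part (1) I would invoke $r=\neg(m_1\wedge\dots\wedge m_\ell)$ together with the second de Morgan relation (\ref{15A}), valid here because $\Lambda[{\mathbb D}(n)]$ is a Stone lattice, to write $r=\neg m_1\vee\dots\vee\neg m_\ell$. Subadditivity of the non-negative valuation then yields $\tau(\neg m_1\vee\dots\vee\neg m_\ell|\rho_n)\le\sum_i\tau(\neg m_i|\rho_n)$, which is exactly the reduced claim; this is the step that genuinely needs the Stone property flagged earlier in the text. Part (2) is then immediate: if $m_1\wedge\dots\wedge m_\ell=1$ then $r=\neg 1={\cal I}=n$, so $\tau(r|\rho_n)={\rm Tr}[\rho_n{\mathfrak P}(n)]=1$, and (\ref{77}) specialises to (\ref{78}).

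For part (3) I would lean on Proposition \ref{pro10}. The first equation in (\ref{m8}) follows directly, since both $\tau(m|\cdot)$ and $\tau(1|\cdot)$ are universal under ${\cal A}'_{nu}$. The real subtlety -- and the main obstacle -- is the second equation, because the negation is lattice-dependent: $\neg m$ is computed in $\Lambda[{\mathbb D}(n)]$ while $\neg 'm$ is computed in the larger $\Lambda[{\mathbb D}(u)]$, and in fact $\neg m\prec\neg 'm$ strictly. The way around this is to extend the computation of Proposition \ref{pro10} to an arbitrary $k|u$: since $\rho_u={\cal A}'_{nu}(\rho_n)$ is supported only on the states ${\cal A}_{nu}$ carries over from $H(n)$, the diagonal sum defining ${\rm Tr}[\rho_u{\mathfrak P}(k)]$ retains only the positions that are simultaneously multiples of $u/k$ and of $u/n$, giving $\tau(k|\rho_u)=\tau({\rm GCD}(k,n)|\rho_n)$. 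It then remains the purely arithmetic check that ${\rm GCD}(m\vee\neg 'm,\,n)=m\vee\neg m$: the extra prime powers that distinguish $\neg 'm$ from $\neg m$ lie either in primes outside $\varpi(n)$ or carry exponents exceeding $e_p(n)$, and both are trimmed by the ${\rm GCD}$ with $n$ down to exactly $m\vee\neg m$. This identifies $\tau(m\vee\neg 'm|\rho_u)$ with $\tau(m\vee\neg m|\rho_n)$ and completes the universality statement.
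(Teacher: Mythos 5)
Your proposal is correct. For parts (1) and (2) it is essentially the paper's own argument run in reverse: the paper starts from the Stone/de Morgan identity $r=\neg(m_1\wedge\dots\wedge m_\ell)=(\neg m_1)\vee\dots\vee(\neg m_\ell)$, applies Boole's inequality (which, exactly as you note, is just non-negativity plus the valuation identity available in non-contextual quantum mechanics), and then substitutes $\tau(m_i\vee\neg m_i|\rho_n)-{\widetilde \tau}(m_i|\rho_n)=\tau(\neg m_i|\rho_n)$; you instead reduce the target inequality to $\tau(r|\rho_n)\le\sum_i\tau(\neg m_i|\rho_n)$ and close with the same two ingredients, so the content is identical. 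Part (3) is where you genuinely diverge. The paper fixes $k=m\vee\neg' m$ and compares the two diagonal sums directly, showing that the only nonvanishing entries of $\rho_u$ in the sum for $\tau(m\vee\neg' m|\rho_u)$ sit at multiples of $u/(m\vee\neg m)$, the key step being that $n/(m\vee\neg m)$ and $(m\vee\neg' m)/(m\vee\neg m)$ are coprime. You instead prove a transfer formula $\tau(k|\rho_u)=\tau({\rm GCD}(k,n)|\rho_n)$ valid for \emph{every} $k|u$, which is a genuine strengthening of Proposition \ref{pro10} (the case $k|n$), and then reduce the second equation of (\ref{m8}) to the arithmetic identity ${\rm GCD}(m\vee\neg' m,\,n)=m\vee\neg m$. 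Both steps check out: the nonzero terms in ${\rm Tr}[\rho_u{\mathfrak P}(k)]$ sit at multiples of ${\rm LCM}(u/k,u/n)=u/{\rm GCD}(k,n)$ and each equals the corresponding diagonal entry of $\rho_n$; and in the GCD with $n$, primes outside $\varpi(n)$ are deleted, primes in $\varpi(n)-\varpi(m)$ are capped at $e_p(n)$, and primes in $\varpi(m)$ stay at $e_p(m)$, which reproduces $m\vee\neg m$. Your route buys a cleaner and more reusable statement — the transfer lemma handles both equations of (\ref{m8}) at once and isolates the lattice-dependence of negation as a purely arithmetic fact — whereas the paper's computation is tailored to the single case it needs.
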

\begin{proof}
\mbox{}
\begin{itemize}
\item[(1)]
We use the second de Morgan relation of Eq.(\ref{15A}), which is not valid in general Heyting algebras but 
it is valid in Stone lattices like $\Lambda [{\mathbb D}(n)]$. We get
\begin{eqnarray}\label{502}
r=\neg(m_1\wedge...\wedge m_\ell)=(\neg m_1)\vee...\vee (\neg m_\ell).
\end{eqnarray}
Therefore
\begin{eqnarray}
{\mathfrak P}[(\neg m_1)\vee...\vee (\neg m_\ell)]={\mathfrak P}(r).
\end{eqnarray}
For a density matrix $\rho _n$ of the system ${\Sigma}[{\mathbb Z}(n),{\mathbb Z}(n)]$, this leads to the probability
\begin{eqnarray}\label{7zx}
\tau [(\neg m_1)\vee...\vee (\neg m_\ell)|\rho _n]=\tau (r|\rho _n).
\end{eqnarray}
We next use Boole's inequality $q(a\vee b)\le q(a)+q(b)$. We note that in a contextual quantum mechanics,
we get
\begin{eqnarray}
\tau (k_1\vee k_2| \rho _n)\le \tau (k_1 | \rho _n)+\tau (k_2 | \rho _n)+\sigma (k_1,k_2|\rho _n), 
\end{eqnarray}
and Boole's inequality holds if $k_1,k_2$ belong in the same chain, so that $\sigma (k_1,k_2|\rho _n)=0$.
But in a non-contextual quantum mechanics, Boole's inequality holds and Eq.(\ref{7zx}) gives
\begin{eqnarray}\label{mo}
\tau (\neg m_1 | \rho _n)+...+\tau (\neg m_\ell| \rho _n)\ge \tau (r|\rho _n),
\end{eqnarray} 
We also use the relation
\begin{eqnarray}
\tau (m_i\vee \neg m_i | \rho _n)-\tau (\neg m_i | \rho _n)-\tau (m_i | \rho _n)+\tau (1| \rho _n)=0,
\end{eqnarray}
which holds in non-contextual quantum mechanics, to get
\begin{eqnarray}
\tau (m_i\vee \neg m_i | \rho _n)-{\widetilde \tau} (m_i | \rho _n)=\tau (\neg m_i | \rho _n).
\end{eqnarray}
Substitution in Eq.(\ref{mo}) proves the inequality in the proposition.
\item[(2)]
If $m_1\wedge ...\wedge m_\ell=1$, then $r=n$ and $\tau (n|\rho _n)=1$.
\item[(3)] 
The first of Eqs.(\ref{m8}) follows immediately from proposition \ref{pro10}.
The second equation is proved as follows:
\begin{eqnarray}
{\rm Tr}[\rho _n {\mathfrak P}(m \vee \neg m)]=\sum _{a=0}^{M_1-1}\rho _n(ad_1,ad_1);\;\;\;\;M_1=m \vee \neg m;\;\;\;\;\;d_1=\frac{n}{M_1}.
\end{eqnarray}
Taking into account Eq.(\ref{eq10}) we rewrite this as
\begin{eqnarray}
{\rm Tr}[\rho _n {\mathfrak P}(m \vee \neg m)]=\sum _{a=0}^{M_1-1}\rho _u(ag,ag);\;\;\;\;g=d_1\frac{u}{n}=\frac{u}{M_1}
\end{eqnarray}
We need to prove that this is equal to
\begin{eqnarray}\label{e34}
{\rm Tr}[\rho _u {\mathfrak P}(m \vee \neg 'm)]=\sum _{b=0}^{M_2-1}\rho _u(bd_2,bd_2);\;\;\;\;M_2=m \vee \neg 'm;\;\;\;\;\;d_2=\frac{u}{M_2}
\end{eqnarray}
But the fact that $n|u$, implies that $M_1|M_2$ and therefore $d_2|d_1|g$. Consequently, we
need to prove that in the sum of Eq.(\ref{e34}), $\rho _u(bd_2,bd_2)=0$ except for the cases that $bd_2$ is a multiple of $g$.
 
But according to Eq.(\ref{eq10}), $\rho _u(bd_2,bd_2)=0$ unless if $bd_2$ is an integer multiple of $u/n$:
\begin{eqnarray}
bd_2=c\frac{u}{n}\;\;\rightarrow \;\;b=c\frac{u}{nd_2}=\frac{c}{d_1}\frac{M_2}{M_1}
\end{eqnarray} 
In this equation, $c$ has to be a multiple of $d_1$, because $d_1$ and $M_2/M_1$ are coprime integers:
\begin{eqnarray}
&&\frac{M_2}{M_1}=\prod _{p\in \varpi (n)-\varpi(m)}p^{e_p(u)-e_p(n)}\prod _{p\in \varpi (u)-\varpi(n)}p^{e_p(u)};\;\;\;\;
\varpi(m)\subseteq \varpi(n)\subseteq \varpi(u)\nonumber\\
&&d_1=\prod _{p\in \varpi (m)}p^{e_p(n)-e_p(m)}
\end{eqnarray} 
Therefore $b$ is a multiple of $M_2/M_1$ and therefore $bd_2$ is a multiple of $g$.
This completes the proof.
\end{itemize}
\end{proof}
In the case $f_i=0$, Eq.(\ref{78}) can be understood as follows. Since $m_1\wedge...\wedge m_\ell=1$, a state (other than $\ket{X;0}$)
cannot belong to all subsystems $\Sigma (m_i)$, and therefore the sum of the $\ell$ probabilities ${\widetilde \tau}(m_i|\rho _n)$
cannot exceed $\ell -1$.

The following example shows that the logical Bell inequalities are violated, and therefore quantum mechanics is a contextual theory.

\begin{example}\label{105}
In order to get non-zero Heyting factors, $n$ needs to be $\prod p^{e_p}$ with some of the exponents $e_p\ge 2$.
We take $n=900$.
In ${\bf \Sigma}(900)$ we take $m_1=10$, $m_2=75$ and $m_3=36$.
We also consider the density matrix 
\begin{eqnarray}
\rho=a\ket {X_{900};180}\bra {X_{900};180}+b\ket {X_{900};25}\bra {X_{900};25}+(1-a-b)\ket {X_{900};5}\bra {X_{900};5}
\end{eqnarray}
where $a,b, (1-a-b)$ are probabilities.
Then the projectors from the system $\Sigma [{\mathbb Z}(900), {\mathbb Z}(900)]$ into the subsystems 
$\Sigma [{\mathbb Z}(10), {\mathbb Z}(10)]$, $\Sigma [{\mathbb Z}(75), {\mathbb Z}(75)]$
and $\Sigma [{\mathbb Z}(36), {\mathbb Z}(36)]$ are
\begin{eqnarray}
&&{\mathfrak P}(10)=\sum _{\nu=0}^9\ket {X_{900};90\nu}\bra {X_{900};90\nu};\;\;\;\;\;\;
{\mathfrak P}(75)=\sum _{\nu=0}^{74}\ket {X_{900};12\nu}\bra {X_{900};12\nu}\nonumber\\
&&{\mathfrak P}(36)=\sum _{\nu=0}^{35}\ket {X_{900};25\nu}\bra {X_{900};25\nu}.
\end{eqnarray}
Therefore
\begin{eqnarray}
{\widetilde \tau}(10;\rho )=a;\;\;\;\;\;{\widetilde \tau}(75;\rho )=a;\;\;\;\;\;\;
{\widetilde \tau}(36;\rho )=b.
\end{eqnarray}
Also 
\begin{eqnarray}
10\vee \neg 10=10\vee 9=90;\;\;\;\;\;75\vee \neg 75=75\vee 4=300;\;\;\;\;\;36\vee \neg 36=36\vee 25=900
\end{eqnarray}
and
\begin{eqnarray}
&&{\mathfrak P}(10\vee \neg 10)=\sum _{\nu=0}^{89}\ket {X_{900};10\nu}\bra {X_{900};10\nu};\;\;\;\;\;\;
{\mathfrak P}(75\vee \neg 75)=\sum _{\nu=0}^{299}\ket {X_{900};3\nu}\bra {X_{900};3\nu}\nonumber\\
&&{\mathfrak P}(36\vee \neg 36)={\mathfrak P}(900)={\bf 1}
\end{eqnarray}
Therefore
\begin{eqnarray}
\tau (10\vee \neg 10;\rho )=a;\;\;\;\;\;\;\tau (75\vee \neg 75;\rho )=a;\;\;\;\;\;\;
\tau (36\vee \neg 36;\rho )=1
\end{eqnarray}
and the Heyting factors are $f_1=1-a$, $f_2=1-a$ and $f_3=0$. 
Therefore the inequality of Eq.(\ref{78}) becomes $a+a+b\le 3-1-(1-a)-(1-a)$ and is violated.

We note that $m_3$ belongs to the Boolean algebra ${\mathbb D}^B(900)$, and $f_3=0$, as stated in the proposition.
\end{example}

\section{The complete Heyting algebra $\Lambda ({\bf \Sigma} _S)$ of quantum systems}\label{6}

The infinite set $\{\Sigma ({\mathbb Z}(n), {\mathbb Z}(n))\;|\;n\in {\mathbb N}\}$,
is a distributive lattice but it is not complete. In order to make it complete,
we enlarge it as follows:
\begin{eqnarray}
{\bf \Sigma} _S=\{\Sigma [{\cal C}(n),\widetilde {\cal C}(n)]\;|\;n\in {\mathbb N}_S\}.
\end{eqnarray}
This set contains the quantum system $\Sigma [{\cal C}(p^ {\infty}), \widetilde {\cal C}(p^ {\infty})]=\Sigma ({\mathbb Q}_p/{\mathbb Z}_p,{\mathbb Z}_p)$
where the position takes values in ${\mathbb Q}_p/{\mathbb Z}_p$ and the momentum takes values in
${\mathbb Z}_p$, which has been studied as a subject in its own right 
in refs \cite{VOU2,VOU3}. It also contains the 
quantum system $\Sigma [{\cal C}(\Omega), \widetilde {\cal C}(\Omega)]=\Sigma ({\mathbb Q}/{\mathbb Z},\widehat {\mathbb Z})$,
where the position takes values in ${\mathbb Q}/{\mathbb Z}$ and the momentum takes values in $\widehat {\mathbb Z}$, which
has been studied as a subject in its own right in refs \cite{VOU4}. 
Table 1 presents a summary of these systems. 
Here we extend the formalism of the previous section to the set ${\bf \Sigma} _S$.
In particular, we define quantities analogous to those used in the previous section, so that the formalism developed there, is also applicable here.

It is easily seen that ${\bf \Sigma} _S$ is isomorphic to ${\mathfrak Z} _S$ and $\widetilde {{\mathfrak Z} _S}$,
and therefore it is a complete Heyting algebra with
\begin{eqnarray}
{\cal O}=\Sigma ({\mathbb Z}(1),{\mathbb Z}(1));\;\;\;\;\;{\cal I}=\Sigma ({\mathbb Q}/{\mathbb Z},\widehat {\mathbb Z}). 
\end{eqnarray}
This is the analogue of Eqs.(\ref{bt}),(\ref{300}) for the relevant groups. We denote this algebra $\Lambda ({\bf \Sigma} _S)$.

From Eqs(\ref{D1}),(\ref{D2}), we see that if $\Sigma [{\cal C}(n),\widetilde {\cal C}(n)]$ is a $\pi$-system then
\begin{eqnarray}
\neg \Sigma [{\cal C}(n),\widetilde {\cal C}(n)]=\Sigma \left [\bigoplus _{p\in \Pi-\pi}({\mathbb Q}_p/{\mathbb Z}_p),\prod _{p\in \Pi-\pi}{\mathbb Z}_p
\right ] 
\end{eqnarray}
The implication can be found using Eqs.(\ref{100B}),(\ref{100BB}).

The subset of ${\bf \Sigma} _S$ given by
\begin{eqnarray}
{\bf \Sigma} _S^B=\{\Sigma ({\cal C}[\Omega (\pi)],\widetilde {\cal C}[\Omega (\pi)])\;|\;\pi \subseteq \Pi\},
\end{eqnarray}
with the $\wedge$ and $\vee$ operations, is a Boolean algebra. 
This is the analogue of Eqs.(\ref{129}), (\ref{301}) for the relevant groups, and it contains $\pi$-Hall subsystems of $\Sigma ({\mathbb Q}/{\mathbb Z},\widehat {\mathbb Z})$.

In Eq.(\ref{oper}) we have defined the projectors ${\mathfrak P}(n)$ with $n\in {\mathbb N}$.
Here we extend this definition, and define
the projector ${\mathfrak P}(n)$ (with $n\in {\mathbb N}_S$) into the Schwartz-Bruhat space of the system
$\Sigma [{\cal C}(n), \widetilde {\cal C}(n)]$.
The Schwartz-Bruhat space ${\mathfrak B}({\mathbb Q}/{\mathbb Z},\widehat {\mathbb Z})$ of the system 
$\Sigma ({\mathbb Q}/{\mathbb Z},\widehat {\mathbb Z})$ has been defined in \cite{VOUREV} (definition 6.1),
and the Schwartz-Bruhat space ${\mathfrak B}[{\cal C}(n), \widetilde {\cal C}(n)]$ of the system 
$\Sigma [{\cal C}(n), \widetilde {\cal C}(n)]$ has been defined in \cite{VOUREV} (definition 7.1).
The projector ${\mathfrak P}(n)$ (with $n\in {\mathbb N}_S$) maps the function
$f(x) \in {\mathfrak B}({\mathbb Q}/{\mathbb Z},\widehat {\mathbb Z})$ where $x\in {\mathbb Q}/{\mathbb Z}$
into the function 
\begin{eqnarray}
&&[{\mathfrak P}(n)f](x)=f(x)\;\;{\rm if}\;x\in {\cal C}(n)\nonumber\\
&&[{\mathfrak P}(n)f](x)=0\;\;{\rm if}\;x\notin {\cal C}(n)
\end{eqnarray}
Then the probabilities $\tau [n|f(x)]$ are given by 
\begin{eqnarray}
\tau [n|f(x)]=\int _{{\mathbb Q}/{\mathbb Z}}|[{\mathfrak P}(n)f](x)|^2dx
\end{eqnarray}
For functions in the Schwartz-Bruhat space these integrals are finite sums, and they converge.

\section{Discussion}

We have developed a lattice theory language for the set $\{G_i\}$ of subgroups of an Abelian group $G$
and also for the set $\{{\widetilde G}_i\}$ of the Pontryagin duals of these groups.
For $G={\mathbb Q}/{\mathbb Z}$ we get the Heyting algebra $\Lambda ({\mathfrak Z}_S)$, and for
${\widetilde G}={\widehat {\mathbb Z}}$ we get the Heyting algebra $\Lambda ({\widetilde {\mathfrak Z}_S})$.
We have discussed the meaning of the logical connectives in this formalism.

We have considered quantum systems $\Sigma (G_i, {\widetilde G}_i)$, with positions in $G_i$ and momenta in
${\widetilde G}_i$. We have studied the finite set ${\bf \Sigma}(n)$ of subsystems of the   
$\Sigma [{\mathbb Z}(n),{\mathbb Z}(n)]$ as a finite Heyting algebra, and we have discussed the physical meaning of the logical connectives 
and of the non-validity of the law of the exclusive middle, in this formalism.
Ideas from Sylow theory for finite groups, have been transfered into the corresponding quantum systems.
 
We have shown that quantum probabilities, related to projectors in the subsystems,  are incompatible
with associativity of the join in the Heyting algebra.
This is because of the probabilities $\sigma(m_1,m_2|\rho _n)$, which are
related to the fact that the `disjunction space' $H(m_1 \vee m_2)$ is larger than the space ${\rm span}[H(m_1)\cup H(m_2)]$ of superpositions.
Disjunction is more general concept than superposition.
This leads to contextuality, which in the present formalism has as contexts, chains of the Heyting algebra.
Within a chain, Eq.(\ref{cla})
for the classical probabilities of disjunctions, is also valid for quantum probabilities. 
If quantum mechanics were a non-contextual theory, it would obey the logical Bell inequalities
of Eq.(\ref{77}), which involve projectors to subsystems, and which
generalize previous logical Bell inequalities\cite{A1}, for Heyting (as opposed to Boolean) variables.
In example \ref{105} we have shown that quantum mechanics violates these inequalities, and therefore it is a contextual theory.

The infinite set $\{\Sigma [{\mathbb Z}(n),{\mathbb Z}(n)]\;|\;n\in {\mathbb N}\}$ 
is a directed partially ordered set with the partial order `subsystem'.
It is not a directed-complete partial order, and this could be interpreted as `something is missing'.
We have added the `top elements' in this set (the systems in the last three rows in table 1) 
in order to make it a directed-complete partial order\cite{VOU1,VOUREV}. 
This enlarged set is a complete Heyting algebra, and in section \ref{6} we have defined, 
the various quantities used earlier, for these systems also.  

Unlike the logic of quantum measurements which is based on orthomodular lattices, the logic of subsystems studied here, uses distributive lattices (Heyting algebras).
In this case the disjunction is not equivalent to superposition, and this provides a different insight to contextuality.

\newpage
\begin{table}
\caption{The groups $C(n)$ where $n\in {\mathbb N}_S$, their Pontryagin dual groups $\widetilde C(n)$, the quantum systems
$\Sigma [{\cal C}(n),\widetilde {\cal C}(n)]$, and the corresponding projectors ${\mathfrak P}(n)$}
\centering
\begin{tabular}{|c|c|c|c|}\hline
${\cal C}(n)$& $\widetilde {\cal C}(n)$& $\Sigma [{\cal C}(n),\widetilde {\cal C}(n)]$& ${\mathfrak P}(n)$\\ \hline\hline
${\cal C}(n)\cong {\mathbb Z}(n)$ ($n\in {\mathbb N}$) & $\widetilde {\cal C}(n)\cong {\mathbb Z}(n)$ &$\Sigma [{\mathbb Z}(n),{\mathbb Z}(n)]$&${\mathfrak P}(n)$\\ \hline
${\cal C}(p^\infty)\cong {\mathbb Q}_p/{\mathbb Z}_p$ & $\widetilde {\cal C}(p^\infty)\cong{\mathbb Z}_p$&$\Sigma [{\mathbb Q}_p/{\mathbb Z}_p,{\mathbb Z}_p]$&${\mathfrak P}(p^\infty)$\\ \hline
${\cal C}[(\Omega (\pi)]\cong  \bigoplus _{p\in \pi}{\mathbb Q}_p/{\mathbb Z}_p$ & $\widetilde {\cal C}[(\Omega (\pi)]\cong \prod _{p\in \pi}{\mathbb Z}_p$&
$\Sigma [\prod _{p\in \pi}{\mathbb Q}_p/{\mathbb Z}_p,\prod _{p\in \pi}{\mathbb Z}_p]$&${\mathfrak P}(\Omega (\pi))$\\ \hline
${\cal C}(\Omega )\cong {\mathbb Q}/{\mathbb Z}\cong \bigoplus _{p\in \Pi}{\mathbb Q}_p/{\mathbb Z}_p$ & $\widetilde {\cal C}(\Omega )\cong {\widehat {\mathbb Z}}\cong \prod _{p\in \Pi}{\mathbb Z}_p$
&$\Sigma ({\mathbb Q}/{\mathbb Z},{\widehat {\mathbb Z}})$&${\bf 1}$\\ \hline
\end{tabular}
\end{table}
\newpage

\end{document}